\newtheorem{theorem}{Theorem}
\newtheorem{Theorem}[theorem]{Theorem}
\newtheorem{Assumption}{Assumption}
\newtheorem{Corollary}[theorem]{Corollary}
\newtheorem{Example}[theorem]{Example}
\newtheorem{Proposition}[theorem]{Proposition}
\newtheorem{Remark}[theorem]{Remark}
\title{Pricing Perpetual American put options with asset-dependent discounting}
\author{Jonas Al-Hadad}
\address{Faculty of Pure and Applied Mathematics\\
Wrocław University of Science and Technology\\
ul. Wyb. Wyspiańskiego 27, 50-370 Wrocław \\
Poland}
\email{jonas.al-hadad@pwr.edu.pl}
\author{Zbigniew Palmowski}
\address{Faculty of Pure and Applied Mathematics\\
Wrocław University of Science and Technology\\
ul. Wyb. Wyspiańskiego 27, 50-370 Wrocław \\ Poland}
\email{zbigniew.palmowski@pwr.edu.pl}
\thanks{Jonas Al-Hadad and Zbigniew Palmowski have been partially supported by the National Science Centre under the grant 2016/23/B/HS4/00566.}
\date{\today}
\subjclass[2010]{Primary: 60G40; Secondary: 60J60; 91B28} %
\keywords{}
\begin{document}

\begin{abstract}
The main objective of this paper is to present an algorithm
of pricing perpetual American put options with asset-dependent discounting.
The value function of such an instrument can be described as
\begin{equation*}
V^{\omega}_{\text{\rm A}^{\text{\rm Put}}}(s) = \sup_{\tau\in\mathcal{T}}
\mathbb{E}_{s}[e^{-\int_0^\tau \omega(S_w) dw} (K-S_\tau)^{+}],
\end{equation*}
where $\mathcal{T}$ is a family of stopping times, $\omega$ is a discount function and $\mathbb{E}$ is an expectation taken with respect to a martingale measure.
Moreover, we assume that the asset price process $S_t$ is a geometric L\'evy process with negative exponential jumps, i.e. $S_t = s e^{\zeta t + \sigma B_t - \sum_{i=1}^{N_t} Y_i}$.
The asset-dependent discounting is reflected in the $\omega$ function, so this approach is a generalisation of the classic case when $\omega$ is constant. It turns out that under certain conditions on the $\omega$ function, the value function $V^{\omega}_{\text{\rm A}^{\text{\rm Put}}}(s)$ is convex and can be represented in a closed form; see \cite{hadadPalma}.
We provide an option pricing algorithm in this scenario and we present exact calculations for the particular choices of $\omega$ such that $V^{\omega}_{\text{\rm A}^{\text{\rm Put}}}(s)$ takes a simplified form.

\vspace{3mm}

\noindent {\sc Keywords.} Option pricing $\star$ American option $\star$ L\'evy process

\end{abstract}

\maketitle

\pagestyle{myheadings} \markboth{\sc J.\ Al-Hadad --- Z.\ Palmowski
} {\sc Pricing algorithm of Perpetual American put option with asset-dependent discounting}

\vspace{1.8cm}

\tableofcontents

\newpage

\section{Introduction}
In this paper we consider a perpetual American put option with asset-dependent discounting.
We consider a standard stochastic background for this problem, i.e. we define a complete filtered risk-neutral probability space
$(\Omega, \mathcal{F}, \{\mathcal{F}_t\}_{t\geq 0}, \mathbb{P})$, on which we define the asset price process $S_t$.
Then $\mathcal{F}_t$ is a natural filtration of $S_t$ satisfying the usual conditions and $\mathbb{P}$ is a risk-neutral measure under which the discounted (with respect to the risk-free interest rate $r>0$) asset price process $e^{-rt}S_t$ is a local martingale.
A family of $\mathcal{F}_t$-stopping times is denoted by $\mathcal{T}$ while $\mathbb{E}_s$ denotes the expectation with respect to $\mathbb{P}$ when $S_0=s=e^x$.
The value function of the perpetual American put option with asset-dependent discounting can be represented by
\begin{equation}\label{mainProblem}
V^{\omega}_{\text{\rm A}^{\text{\rm Put}}}(s) := \sup_{\tau\in\mathcal{T}} \mathbb{E}_{s}\left[e^{-\int_0^\tau \omega(S_w) dw} (K - S_\tau)^{+}\right].
\end{equation}
The asset-dependent discounting is reflected in the $\omega$ function what is our key concept considered in this article.
We underline here that the discount function $\omega$ for various economical reasons can be different from the risk-free interest rate $r>0$; see \cite{hadadPalma} for further explanations.
The way we choose discounting is to model strong dependence
of discount factor with the asset price.
The goal is to understand various economical phenomena that might appear in this extreme case. Our approach differs from typical studies considered in the literature, where the interest rate is independent from the asset price or there is a weak dependence between these two factors.
Therefore, this research is noteworthy not only in the context of option pricing, but also in other areas where optimisation problems appear.

Moreover, we assume that the asset price process $S_t$ is a geometric L\'evy process with negative exponential jumps, i.e.
\begin{equation}\label{S_t}
S_t := s e^{X_t}
\end{equation}
with
\begin{equation}\label{X_t}
X_t := \zeta t + \sigma B_t - \sum_{i=1}^{N_t} Y_i,
\end{equation}
where $\zeta$ and $\sigma>0$ are constant, $N_t$ is the Poisson process with intensity $\lambda\geq 0$ independent of Brownian motion $B_t$ and
$\{Y_i\}_{i\in \mathbb{N}}$ are i.i.d. random variables independent of $B_t$ and $N_t$ having exponential distribution with mean $1/\varphi>0$.
Under the martingale measure $\mathbb{P}$ the drift parameter is of the form $$\zeta := r - \frac{\sigma^2}{2} + \frac{\lambda}{\varphi+1}.$$ Note that when $\lambda=0$ then we end up with the classical Black-Scholes model.
However, empirical studies show that stock prices have heavier left tail than normal distribution.
Therefore, nowadays many books and articles concern, as we do in this work, pricing of derivative securities in market models based on
L\'evy processes; see \cite{cont} for more details.


The main objective of this paper is to present an algorithm
of pricing perpetual American put options with asset-dependent discounting with the value function defined in \eqref{mainProblem} and the asset price process $S_t$ given in \eqref{S_t}.
Furthermore, we take into account some specific scenarios (e.g. when $\sigma = 0$ or $\lambda=0$) and for these cases we are able to derive analytical forms of the value function, while for more complex examples we show how to handle them numerically.

Detailed theoretical results of the analysed problem was already developed in \cite{hadadPalma}, where the authors presented the approach of deriving a closed form of value function \eqref{mainProblem} for even a more general setting than it is considered here.
Therefore, in this paper we focus more on numerical side of this problem and analyse in detail few particular cases where
more explicit results can be derived.

Still, before we present the option pricing method in our set-up,
we recall the most important theoretical issues on which our article is based on.
A key step in deriving a closed form of \eqref{mainProblem} is identifying the form of the optimal stopping rule $\tau^{*}$ for which the supremum in \eqref{mainProblem} is attained.
It turns out that under certain conditions on the discount function $\omega$, which are presented in the next section, the value function is convex.
By combining this fact with the classical optimal stopping theory presented e.g. in \cite{peskir}, it allows us to conclude that the optimal stopping region is an interval $[l^{*}, u^{*}]$ and hence
$$\tau^{*}=\inf\{t\geq 0: S_t\in [l^*,u^*]\}$$ for some optimal thresholds $l^{*}\leq u^{*}$.
Observe that for the nonnegative discount function $\omega$ we have $l^{*}=0$ (since waiting is not beneficial).
Therefore, in this case a single continuation region appears. In general, for the negative $\omega$ we can observe a double continuation region; for more details see \cite{tumilewicz}.

The optimal boundary levels $l^{*}$ and $u^{*}$ can be found by application of standard methods of maximising the function
\begin{equation*}
v^{\omega}_{\text{\rm A}^{\text{\rm Put}}}(s, l, u):=\mathbb{E}_{s}\left[e^{-\int_0^{\tau_{l,u}} \omega(S_w) dw} (K-S_{\tau_{l,u}})^{+}\right]
\end{equation*}
over $l$ and $u>l$.
To find $v^{\omega}_{\text{\rm A}^{\text{\rm Put}}}(s, l, u)$ we use exit identities for spectrally negative L\'evy processes containing so-called omega scale functions introduced in \cite{LiBo}.

As shown in \cite[Theorem 9]{hadadPalma}, another way of finding the optimal thresholds $l^*<u^*$
is to apply the classical smooth and continuous fit conditions.

Typically, a price of the option is a solution to a certain Hamiltonian-Jacobi-Bellman (HJB) system and the optimal thresholds are identified using the smooth fit conditions.
We want to underline that our approach is different, although still finding the omega scale functions is done via solving certain
ordinary differential equations.

The paper is organised as follows.
In Section \ref{sec:preliminaries} we introduce basic theory and notation.
Section \ref{sec:mainResults} provides main theoretical results of this paper.
In Section \ref{sec:optionpricinganalytical} we present some specific examples where the option price can be expressed in the explicit way.
Section \ref{sec:optionpricingnumerical} focuses on the purely numerical analysis. We also show there that these two approaches are consistent.
The last section includes our conclusions.

\section{Preliminaries}\label{sec:preliminaries}

\subsection{Assumptions}
It the beginning, we note that the analysed American put option will not be realised when its payoff is equal to $0$.
Hence, we can transform the form of the value function given in \eqref{mainProblem} into the following one
\begin{equation}\label{mainProblem2}
V^{\omega}_{\text{\rm A}^{\text{\rm Put}}}(s) := \sup_{\tau\in\mathcal{T}} \mathbb{E}_{s}\left[e^{-\int_0^\tau \omega(S_w) dw} (K - S_\tau)\right].
\end{equation}
We work under the same assumptions as those formulated in \cite[Section 2.2]{hadadPalma}. However, this time we consider slightly more specific assumptions
on the $\omega$ function, namely that
\begin{Assumption}\label{Assumptions}
A discount function $\omega$ is concave, nondecreasing and bounded from below.
\end{Assumption}

From \cite[Remark 3]{hadadPalma} we can then conclude that under Assumption \ref{Assumptions} the value function $V^{\omega}_{\text{\rm A}^{\text{\rm Put}}}(s)$ is convex.

\subsection{Optimal stopping time}
From  \cite[Section 2.4]{hadadPalma} it follows that then the optimal exercise time
is the first entrance of the process $S_t$ into some interval, that is, it has the following form
\begin{equation*}
\tau_{l, u} = \inf\{t\geq 0: S_t\in [l,u]\}.
\end{equation*}
Hence, we can represent value function \eqref{mainProblem2} as
\begin{equation*}
V^{\omega}_{\text{\rm A}^{\text{\rm Put}}}(s) = v^{\omega}_{\text{\rm A}^{\text{\rm Put}}}(s, l^{*}, u^{*}),
\end{equation*}
where
\begin{equation*}
v^{\omega}_{\text{\rm A}^{\text{\rm Put}}}(s, l^{*}, u^{*}):= \sup_{0\leq l\leq u\leq K}v^{\omega}_{\text{\rm A}^{\text{\rm Put}}}(s, l, u)
\end{equation*}
and
\begin{equation}\label{form}
v^{\omega}_{\text{\rm A}^{\text{\rm Put}}}(s, l, u):=\mathbb{E}_{s}\left[e^{-\int_0^{\tau_{l,u}} \omega(S_w) dw} (K-S_{\tau_{l,u}})\right].
\end{equation}
Moreover, we denote the optimal stopping time by
\begin{equation*}
\tau^{*} := \tau_{l^{*}, u^{*}},
\end{equation*}
where $l^{*}$ and  $u^{*}$ realise the supremum above.
As shown in \cite[Theorem 9]{hadadPalma}, another way of identifying the critical points $l^*$ and $u^*$ can be done via application of the smooth fit property. In that case, $l^*$ and $u^*$ satisfy
\begin{equation}
(V^{\omega}_{\text{\rm A}^{\text{\rm Put}}})'(u^{*}) = -1 \quad\text{and}\quad (V^{\omega}_{\text{\rm A}^{\text{\rm Put}}})'(l^{*}) = -1.
\end{equation}


\subsection{Scale functions}\label{scaleFunctions}
By applying the fluctuation theory of
L\'evy processes, we can find a closed form of \eqref{form} and hence of \eqref{mainProblem2} in terms of the so-called omega scale functions.


To introduce them formally, firstly let us define the Laplace exponent of $X_t$ via
\begin{equation*}
\psi(\theta) := \frac{1}{t}\log \mathbb{E}[e^{\theta X_t}\mid X_0=0],
\end{equation*}
which is well-defined for $\theta\geq 0$ since our $X_t$ is a spectrally negative L\'evy process.
In the case of $X_t$ given in \eqref{X_t} the Laplace exponent takes the form
\begin{equation}\label{laplaceExponent}
\psi(\theta) = \zeta \theta + \frac{\sigma^2}{2}\theta^2 - \frac{\lambda\theta}{\varphi+\theta}.
\end{equation}
By
$\Phi(q)$ we denote the right inverse of $\psi(\theta)$, i.e.
\begin{equation*}
\Phi(q):=\sup\{\theta\geq 0: \psi(\theta) = q\},
\end{equation*}
where $q\geq 0$.

The first scale function $W^{(q)}(x)$ is defined as a continuous and increasing function such that $W^{(q)}(x) = 0$ for all $x<0$, while for $x\geq 0$ it is defined via the following Laplace transform
\begin{equation}\label{laplaceTransform}
\int_0^{\infty} e^{-\theta x} W^{(q)}(x) dx = \frac{1}{\psi(\theta)-q}
\end{equation}
for $\theta>\Phi(q)$.
We define also the related scale function $Z^{(q)}(x)$ by
\begin{equation*}
Z^{(q)}(x) := 1 + q \int_0^x W^{(q)}(y) dy,
\end{equation*}
where $x\in\mathbb{R}$.
From \cite{kuznetsov} we know that for $X_t$ given in \eqref{X_t} we have
\begin{equation}\label{W^q(x)}
W^{(q)}(x) = \frac{e^{\gamma_1 x}}{\psi'(\gamma_1)} + \frac{e^{\gamma_2 x}}{\psi'(\gamma_2)} + \frac{e^{\Phi(q) x}}{\psi'(\Phi(q))},
\end{equation}
where $\{\gamma_1, \gamma_2, \Phi(q)\}$ is the set the real solutions to $\psi(\theta) = q$.
In turn, $Z^{(q)}(x)$ is as follows
\begin{equation}\label{Z^q(x)}
Z^{(q)}(x) = 1 + q \bigg(\frac{e^{\gamma_1 x} - 1}{\gamma_1\psi'(\gamma_1)} + \frac{e^{\gamma_2 x} - 1}{\gamma_2\psi'(\gamma_2)} + \frac{e^{\Phi(q) x} - 1}{\Phi(q)\psi'(\Phi(q))}\bigg).
\end{equation}
If we take $\sigma = 0$ or $\lambda = 0$ in \eqref{laplaceExponent} then
$W^{(q)}(x)$ and $Z^{(q)}(x)$ take simplified forms
\begin{equation*}
W^{(q)}(x) = \frac{e^{\gamma_1 x}}{\psi'(\gamma_1)} + \frac{e^{\gamma_2 x}}{\psi'(\gamma_2)}
\end{equation*}
and
\begin{equation*}
Z^{(q)}(x) = 1 + q \bigg(\frac{e^{\gamma_1 x} - 1}{\gamma_1\psi'(\gamma_1)} + \frac{e^{\gamma_2 x} - 1}{\gamma_2\psi'(\gamma_2)}\bigg)
\end{equation*}
for $\gamma_1$ and $\gamma_2$ being again the real solutions to $\psi(\theta) = q$.

The generalisation of $W^{(q)}(x)$ and $Z^{(q)}(x)$ are the $\xi$-scale functions $\{\mathcal{W}^{(\xi)}(x), x\in\mathbb{R}\}$, $\{\mathcal{Z}^{(\xi)}(x), x\in\mathbb{R}\}$, where $\xi$ is an arbitrary measurable function. They are defined as the unique solutions to the following equations
\begin{align}
\mathcal{W}^{(\xi)}(x) &= W(x) + \int_0^{x} W(x - y)\xi(y)\mathcal{W}^{(\xi)}(y)dy,\label{scaleFunctionW} \\
\mathcal{Z}^{(\xi)}(x) &= 1 + \int_0^{x} W(x - y)\xi(y)\mathcal{Z}^{(\xi)}(y)dy\label{scaleFunctionZ},
\end{align}
where $W(x)=W^{(0)}(x)$ is a classical zero scale function.

To simplify notation, we introduce also the following $S_t$ counterparts of the scale functions \eqref{scaleFunctionW} and \eqref{scaleFunctionZ}
\begin{align}
\mathscr{W}^{(\xi)}(s) &:= \mathcal{W}^{(\xi\circ {\rm exp})}(\log s),\label{scaleFunctionW2} \\
\mathscr{Z}^{(\xi)}(s) &:= \mathcal{Z}^{(\xi\circ {\rm exp})}(\log s), \label{scaleFunctionZ2}
\end{align}
where $\xi\circ {\rm exp} (x):= \xi(e^x)$.

For $\alpha$ for which the Laplace exponent is well-defined we can define a new probability measure $\mathbb{P}^{(\alpha)}$ via
\begin{equation*}
\left.\frac{d\mathbb{P}^{(\alpha)}_s}{d\mathbb{P}_s}\right\vert_{\mathcal{F}_t} = e^{\alpha (X_{t}-\log s) - \psi(\alpha)t}.
\end{equation*}
By \cite{exponentialMartingale} and \cite[Cor. 3.10]{kyprianou}, under $\mathbb{P}^{(\alpha)}$, the process $X_t$ is again spectrally negative L\'evy process with the new Laplace exponent
\begin{equation*}
\psi^{(\alpha)}(\theta):= \psi(\theta+\alpha)-\psi(\alpha)=\zeta^{(\alpha)}\theta +\frac{{\sigma^{(\alpha)}}^2}{2}{\theta^{(\alpha)}}^2 - \frac{\lambda^{(\alpha)}\theta^{(\alpha)}}{\varphi^{(\alpha)}+\theta^{(\alpha)}},
\end{equation*}
where
\begin{equation}\label{newparameters}
\zeta^{(\alpha)} := \zeta + \sigma^2\alpha,\quad
\sigma^{(\alpha)} := \sigma,\quad
\lambda^{(\alpha)}:=\frac{\lambda\varphi}{\varphi+\alpha}\quad\text{and}\quad \varphi^{(\alpha)}:=\varphi+\alpha.
\end{equation}

For the new probability measure $\mathbb{P}^{(\alpha)}$ we can define the $\xi$-scale functions
which are denoted by the adding subscript $\alpha$ to the regular counterparts, i.e. $\mathscr{W}^{(\xi)}_{\alpha}(s)$, $\mathscr{Z}^{(\xi)}_{\alpha}(s)$.

Lastly, we define the following auxiliary functions
\begin{align}
&
\omega_u(s):=\omega(su) \quad \text{and}\quad
\omega_u^\alpha(s):=\omega_u(s)- \psi(\alpha).\label{omegaalpha}
\end{align}

\subsection{Theoretical representation of the price}\label{sec:mainResults}
The starting point for our entire analysis are the following results.
The first one is a corollary from \cite[Theorem 15]{hadadPalma}.
%
\begin{Theorem}\label{mainTheorem}
Let Assumption \ref{Assumptions} holds and assume that $\omega$ is nonnegative. 
Then the optimal stopping region is of the form $(0, u^*]$ and  we have
\leavevmode
\begin{enumerate}
\item
For $\sigma=0$ and $\lambda>0$
\begin{align}\label{sigma=0}
V^{\omega}_{\text{\rm A}^{\text{\rm Put}}}(s) := \sup_{u>0}v^{\omega}_{\text{\rm A}^{\text{\rm Put}}}(s, 0, u) = \sup_{u>0}\bigg\{\left(K -
\frac{u\varphi}{\varphi+1}\right)\left(\mathscr{Z}^{(\omega_u)}\left(\frac{s}{u}\right)- c_{\mathscr{Z}^{(\omega)}/\mathscr{W}^{(\omega)}} \mathscr{W}^{(\omega_u)}\left(\frac{s}{u}\right)\right)\bigg\}.
\end{align}

\item
For $\lambda=0$ and $\sigma>0$
\begin{equation}\label{lambda=0}
\begin{aligned}
V^{\omega}_{\text{\rm A}^{\text{\rm Put}}}(s) &:= \sup_{u>0}v^{\omega}_{\text{\rm A}^{\text{\rm Put}}}(s, 0, u) \\ &= \sup_{u>0}\bigg\{(K-u)
\left(\lim_{\alpha\rightarrow\infty} \left(\frac{s}{u}\right)^{\alpha}\left(\mathscr{Z}^{(\omega_u^\alpha)}_{\alpha}\left(\frac{s}{u}\right) -
c_{\mathscr{Z}^{(\omega^\alpha)}_{\alpha}/\mathscr{W}^{(\omega^\alpha)}_{\alpha}}
\mathscr{W}^{(\omega_u^\alpha)}_{\alpha}\left(\frac{s}{u}\right)\right)\right)\bigg\}.
\end{aligned}
\end{equation}

\item
For $\sigma>0$ and $\lambda>0$
\begin{equation}\label{sigma>0}
\begin{aligned}
V^{\omega}_{\text{\rm A}^{\text{\rm Put}}}(s) &:= \sup_{u>0}v^{\omega}_{\text{\rm A}^{\text{\rm Put}}}(s, 0, u) = \sup_{u>0}\bigg\{\left(K -
\frac{u \varphi}{\varphi+1}\right)\left(\mathscr{Z}^{(\omega_u)}\left(\frac{s}{u}\right)- c_{\mathscr{Z}^{(\omega)}/\mathscr{W}^{(\omega)}} \mathscr{W}^{(\omega_u)}\left(\frac{s}{u}\right)\right)
\\ &+ (K-u)
\left(\lim_{\alpha\rightarrow\infty} \left(\frac{s}{u}\right)^{\alpha}\left(\mathscr{Z}^{(\omega_u^\alpha)}_{\alpha}\left(\frac{s}{u}\right) -
c_{\mathscr{Z}^{(\omega^\alpha)}_{\alpha}/\mathscr{W}^{(\omega^\alpha)}_{\alpha}}
\mathscr{W}^{(\omega_u^\alpha)}_{\alpha}\left(\frac{s}{u}\right)\right)\right)\bigg\},
\end{aligned}
\end{equation}

where
\begin{equation*}
c_{\mathscr{Z}^{(\omega)}/\mathscr{W}^{(\omega)}} = \lim_{z\rightarrow\infty}\frac{\mathscr{Z}^{(\omega)}(z)}{\mathscr{W}^{(\omega)}(z)}
\quad
\text{and}
\quad
c_{\mathscr{Z}^{(\omega^\alpha)}_{\alpha}/\mathscr{W}^{(\omega^\alpha)}_{\alpha}} = \lim_{z\rightarrow\infty}\frac{\mathscr{Z}^{(\omega^\alpha)}_{\alpha}(z)}{\mathscr{W}^{(\omega^\alpha)}_{\alpha}(z)}.
\end{equation*}

\end{enumerate}

The optimal boundary $u^{*}$ in \eqref{lambda=0} and \eqref{sigma>0} can be determined by the smooth fit condition
\begin{equation*}
(V^{\omega}_{\text{\rm A}^{\text{\rm Put}}})'(u^*) = -1,
\end{equation*}
while the optimal boundary $u^{*}$ in \eqref{sigma=0} can be determined by the continuous fit condition
\begin{equation*}
V^{\omega}_{\text{\rm A}^{\text{\rm Put}}}(u^*) = K - u^*.
\end{equation*}

\end{Theorem}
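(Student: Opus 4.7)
The plan is to derive the three formulas as specialisations of the general double-barrier representation \cite[Theorem 15]{hadadPalma} to the nonnegative-$\omega$ setting, combined with the standard fit characterisation of the free boundary $u^{*}$. That cited theorem expresses $v^{\omega}_{\text{A}^{\text{Put}}}(s,l,u)$ as a combination of the omega scale functions $\mathscr{W}^{(\omega_u)}$, $\mathscr{Z}^{(\omega_u)}$ (governing the contribution of a downward jump across the continuation region) and their Esscher-transformed analogues $\mathscr{W}^{(\omega_u^\alpha)}_\alpha$, $\mathscr{Z}^{(\omega_u^\alpha)}_\alpha$ (governing the creeping contribution). Under Assumption~\ref{Assumptions} together with $\omega\geq 0$, waiting from a high underlying value is never beneficial, so $l^{*}=0$; letting $l\downarrow 0$ in the general formula collapses the prefactors of $\mathscr{W}^{(\omega_u)}$ and $\mathscr{W}^{(\omega_u^\alpha)}_\alpha$ to the asymptotic ratios $c_{\mathscr{Z}^{(\omega)}/\mathscr{W}^{(\omega)}}$ and $c_{\mathscr{Z}^{(\omega^\alpha)}_\alpha/\mathscr{W}^{(\omega^\alpha)}_\alpha}$ by standard long-run asymptotics of the Laplace transform \eqref{laplaceTransform} and of the Volterra equations \eqref{scaleFunctionW}--\eqref{scaleFunctionZ}.

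For Case~(1), $\sigma=0$ and $\lambda>0$ make $S_t$ pure jump, so entering $(0,u^{*}]$ from above happens only by a downward jump. The memoryless property of $\mathrm{Exp}(\varphi)$ then gives that, conditional on the crossing jump, the overshoot of $X$ below $\log u$ is again $\mathrm{Exp}(\varphi)$, whence $\mathbb{E}[S_{\tau^{*}}\mid\text{crossing jump}]=u\,\mathbb{E}[e^{-Y}]=u\varphi/(\varphi+1)$, which yields the payoff factor $K-u\varphi/(\varphi+1)$ in \eqref{sigma=0}. Because the exit necessarily undershoots $u^{*}$, only continuous (not smooth) fit is available, so $u^{*}$ is determined by $V^{\omega}_{\text{A}^{\text{Put}}}(u^{*})=K-u^{*}$.

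For Case~(2), $\lambda=0$ and $\sigma>0$ make $X_t$ a Brownian motion with drift, so all exits are continuous, $S_{\tau^{*}}=u^{*}$, and the payoff factor is $K-u^{*}$. Within the omega scale function calculus, the creeping contribution is extracted by the limit $\lim_{\alpha\to\infty}(s/u)^{\alpha}\bigl(\cdots\bigr)$: the Esscher measure $\mathbb{P}^{(\alpha)}$ produces the prefactor $(s/u)^{\alpha}$ through its Radon--Nikodym derivative evaluated at the exit point $X_{\tau^{*}}=\log u$, while the $\alpha\to\infty$ limit isolates the Gaussian creeping mass via the updated parameters \eqref{newparameters}. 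This is the step I anticipate as the main technical obstacle, since it requires uniform-in-$\alpha$ control of the fixed-point equations defining $\mathscr{W}^{(\omega_u^\alpha)}_\alpha$ and $\mathscr{Z}^{(\omega_u^\alpha)}_\alpha$, and this is precisely where the heavy machinery of \cite{hadadPalma} is invoked. Smooth fit $(V^{\omega}_{\text{A}^{\text{Put}}})'(u^{*})=-1$ then follows by the classical argument of \cite{peskir} for problems with a nontrivial Gaussian component. Finally, Case~(3) is a direct superposition: under both $\sigma>0$ and $\lambda>0$ the exit event decomposes additively into a jump part and a creeping part whose contributions to $v^{\omega}$ sum as given by \cite[Theorem 15]{hadadPalma}; smooth fit again holds through the Brownian component, and $u^{*}$ is determined by the first-order condition for the supremum in \eqref{sigma>0}.
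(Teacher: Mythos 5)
Your proposal is correct and follows essentially the same route as the paper, which states this theorem without proof as a direct corollary of \cite[Theorem 15]{hadadPalma}: you specialise the general double-barrier representation to $l^{*}=0$ (justified by $\omega\geq 0$), read off the jump term via the memoryless overshoot giving the factor $K-u\varphi/(\varphi+1)$ and the creeping term via the Esscher limit $\alpha\to\infty$, and characterise $u^{*}$ by smooth or continuous fit exactly as in \cite[Theorem 9]{hadadPalma}. The only caveat is that the heavy lifting (the uniform-in-$\alpha$ control you flag, and the $b\uparrow\infty$ limit producing the constants $c_{\mathscr{Z}^{(\omega)}/\mathscr{W}^{(\omega)}}$) is indeed delegated to the cited reference in both your argument and the paper's.
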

\begin{Remark}\label{valueFunctionInsOrInx}\rm
Let us note that using \eqref{scaleFunctionW2} and \eqref{scaleFunctionZ2} we can interpret the value functions occurring in Theorem \ref{mainTheorem} both as the functions of $s$ variable and $x$ variable, where $x=\log s$.
\end{Remark}

\begin{Remark}\label{TwoComponentsOfTheSum}\rm
Formula \eqref{lambda=0} corresponds to the continuous transition of $S_t$ to an interval $(0, u]$, while \eqref{sigma=0} describes the situation when $S_t$ jumps from $(u, \infty)$ into $(0, u]$. A combination of these two components makes up formula \eqref{sigma>0}.
\end{Remark}

To find the option price $V^{\omega}_{\text{\rm A}^{\text{\rm Put}}}(s)$ we have to identify
$$\mathscr{W}^{(\omega_u^\alpha)}_\alpha(s)= \mathcal{W}^{(\omega_u^\alpha\circ {\rm exp})}_\alpha(\log s)\quad\text{and}\quad
\mathscr{Z}^{(\omega_u^\alpha)}_\alpha(s) = \mathcal{Z}^{(\omega_u^\alpha\circ {\rm exp})}_\alpha(\log s),$$
where
$\omega_u^\alpha(s)$ is given in \eqref{omegaalpha} and the case of $\alpha=0$ corresponds to
$\mathscr{W}^{(\omega_u)}(s)$ and $\mathscr{Z}^{(\omega_u)}(s)$.

Observe that 
we need to find the above $\xi$-scale functions for $\xi =\omega_u^\alpha\circ {\rm exp}$ under measure $\mathbb{P}^{(\alpha)}$, i.e. we have to identify $\mathcal{W}^{(\xi)}_{\alpha}(x)$ and $\mathcal{Z}^{(\xi)}_{\alpha}(x)$. This is equivalent to
taking our asset price process $S_t$ of the form of \eqref{S_t} but with the new parameters given in \eqref{newparameters}.

The second key result for our numerical analysis follows straightforward from \cite[Theorem 16]{hadadPalma} and allows to identify the above omega scale functions using ordinary differential equations.


\begin{Theorem}\label{mainTheoremODE}
Assume that $\xi$ is continuously differentiable. 
Then
\leavevmode
\begin{enumerate}
\item
If $\sigma=0$ and $\lambda>0$ or $\lambda=0$ and $\sigma>0$ then $\mathcal{W}^{(\xi)}(x)$ solves
\begin{equation}\label{secondOrderODE}
{\mathcal{W}^{(\xi)}}''(x) = ((\Upsilon_1+\Upsilon_2)\xi(x)+\gamma_2){\mathcal{W}^{(\xi)}}'(x) + ((\Upsilon_1+\Upsilon_2)\xi'(x) - \Upsilon_1\gamma_2 \xi(x))\mathcal{W}^{(\xi)}(x)
\end{equation}
with
\begin{equation*}
\begin{cases}
{\mathcal{W}^{(\xi)}}(0) = \Upsilon_1 + \Upsilon_2, \\
{\mathcal{W}^{(\xi)}}'(0) = \Upsilon_2\gamma_2 + (\Upsilon_1+\Upsilon_2)^2\xi(0).
\end{cases}
\end{equation*}
Moreover, $\mathcal{Z}^{(\xi)}(x)$ solves the same equation \eqref{secondOrderODE} with
\begin{equation*}
\begin{cases}
{\mathcal{Z}^{(\xi)}}(0) = 1, \\
{\mathcal{Z}^{(\xi)}}'(0) = (\Upsilon_1+\Upsilon_2)\xi(0).
\end{cases}
\end{equation*}

\item
If $\sigma>0$ and $\lambda>0$ then $\mathcal{W}^{(\xi)}(x)$ solves
\begin{equation}\label{thirdOrderODE}
\begin{split}
{\mathcal{W}^{(\xi)}}'''(x) &= \left(\gamma_2+\gamma_3\right){\mathcal{W}^{(\xi)}}''(x) \\&+ \left(\Upsilon_2(\gamma_2-\gamma_3)\xi(x) - \gamma_2\gamma_3 - \Upsilon_1\gamma_3\xi(x)\right){\mathcal{W}^{(\xi)}}'(x) \\&+ \left(\Upsilon_2(\gamma_2-\gamma_3)\xi'(x) + \Upsilon_1\gamma_2\gamma_3 \xi(x) - \Upsilon_1\gamma_3 \xi'(x)\right){\mathcal{W}^{(\xi)}}(x)
\end{split}
\end{equation}
with
\begin{equation*}
\begin{cases}
{\mathcal{W}^{(\xi)}}(0) = 0, \\
{\mathcal{W}^{(\xi)}}'(0) = \Upsilon_2\gamma_2 + \Upsilon_3\gamma_3, \\
{\mathcal{W}^{(\xi)}}''(0) = \Upsilon_2 {\gamma_2}^2 + \Upsilon_3 {\gamma_3}^2.
\end{cases}
\end{equation*}
Moreover, $\mathcal{Z}^{(\xi)}(x)$ solves the same equation \eqref{thirdOrderODE} with
\begin{equation*}
\begin{cases}
{\mathcal{Z}^{(\xi)}}(0) = 1, \\
{\mathcal{Z}^{(\xi)}}'(0) = 0, \\
{\mathcal{Z}^{(\xi)}}''(0) = \xi(0)(\Upsilon_2(\gamma_2-\gamma_3) - \Upsilon_1 \gamma_3).
\end{cases}
\end{equation*}

\end{enumerate}

\end{Theorem}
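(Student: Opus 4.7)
The plan is to derive both ODEs by repeatedly differentiating the defining integral equations \eqref{scaleFunctionW}--\eqref{scaleFunctionZ} and exploiting the very explicit form of the classical scale function $W = W^{(0)}$ recorded in Subsection \ref{scaleFunctions}. Labelling $\gamma_1 = 0$ and $\Upsilon_i = 1/\psi'(\gamma_i)$, one has in case (1) the two-term sum $W(x) = \Upsilon_1 + \Upsilon_2 e^{\gamma_2 x}$, so $W$ is annihilated on $[0,\infty)$ by the constant-coefficient operator $D(D - \gamma_2)$, giving $W'' = \gamma_2 W'$ together with $W(0) = \Upsilon_1 + \Upsilon_2$ and $W'(0) = \Upsilon_2 \gamma_2$. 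In case (2) there is a third real root $\gamma_3$, one has $W(x) = \Upsilon_1 + \Upsilon_2 e^{\gamma_2 x} + \Upsilon_3 e^{\gamma_3 x}$ with $W(0) = 0$ (which forces $\Upsilon_1 + \Upsilon_2 + \Upsilon_3 = 0$ when $\sigma > 0$), $W'(0) = \Upsilon_2 \gamma_2 + \Upsilon_3 \gamma_3$, $W''(0) = \Upsilon_2 \gamma_2^2 + \Upsilon_3 \gamma_3^2$, and $W$ is annihilated by $D(D - \gamma_2)(D - \gamma_3)$, i.e.\ $W''' = (\gamma_2 + \gamma_3) W'' - \gamma_2 \gamma_3 W'$.

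Next I would differentiate the convolution $I(x) := \int_0^x W(x-y) \xi(y) \mathcal{W}^{(\xi)}(y)\, dy$ via the Leibniz rule: each differentiation produces a boundary term $W^{(k)}(0) \xi(x) \mathcal{W}^{(\xi)}(x)$ together with the convolution of $W^{(k+1)}$ against $\xi \mathcal{W}^{(\xi)}$. Iterating this twice in case (1) and three times in case (2), and then invoking the homogeneous ODE satisfied by $W$, the remaining top-order convolution collapses into a linear combination of the lower-order convolutions, which have already been expressed through $\mathcal{W}^{(\xi)}$, $(\mathcal{W}^{(\xi)})'$ (and, in case 2, $(\mathcal{W}^{(\xi)})''$) by the earlier stages. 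Substituting in the explicit values of $W^{(k)}(0)$ from the previous paragraph and using the assumption $\xi \in C^1$ to split the boundary derivative $\bigl(W(0)\xi(x)\mathcal{W}^{(\xi)}(x)\bigr)'$ into a $\xi'$--term plus a $\xi$--term, the coefficients should collect into exactly \eqref{secondOrderODE} or \eqref{thirdOrderODE}. The initial conditions then follow by evaluating \eqref{scaleFunctionW} and its successive derivatives at $x = 0$, where the integral and its appropriate derivatives vanish. An identical computation applied to \eqref{scaleFunctionZ}, in which the inhomogeneous term $W(x)$ is replaced by the constant $1$, produces the same ODE because the associated linear operator annihilates $1$, together with the claimed initial conditions for $\mathcal{Z}^{(\xi)}$.

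The main difficulty will lie in the bookkeeping of the boundary terms generated by iterated Leibniz differentiation, particularly in case (2) where three convolutions of different orders must be recombined so that all coefficients collapse to the compact form \eqref{thirdOrderODE}. Since \cite[Theorem 16]{hadadPalma} already establishes this type of ODE characterisation for the omega scale functions of a general spectrally negative L\'evy process whose classical scale function is a finite sum of exponentials, a cleaner route in practice is simply to specialise that result to the present setting by reading off the roots $\{0, \gamma_2\}$ or $\{0, \gamma_2, \gamma_3\}$ of $\psi(\theta) = 0$ and verifying that the resulting coefficients and initial values match those in the statement.
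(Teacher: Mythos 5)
Your proposal is correct, and it is in fact more self-contained than what the paper does: the paper offers no proof of Theorem \ref{mainTheoremODE} at all, stating only that it ``follows straightforward from [Theorem 16]'' of \cite{hadadPalma}. Your closing remark therefore already captures the paper's entire argument (read off the roots $\{0,\gamma_2\}$ or $\{0,\gamma_2,\gamma_3\}$ of $\psi(\theta)=0$ and the weights $\Upsilon_i=1/\psi'(\gamma_i)$ from \eqref{W^q(x)} with $q=0$, then specialise the cited result), while your main argument supplies the derivation the paper omits. I checked that your route goes through: differentiating \eqref{scaleFunctionW} once gives the boundary term $W(0)\xi(x)\mathcal{W}^{(\xi)}(x)$ plus the convolution against $W'$, and after two (resp.\ three) Leibniz steps the top-order convolution collapses via $W''=\gamma_2 W'$ (resp.\ $W'''=(\gamma_2+\gamma_3)W''-\gamma_2\gamma_3 W'$), with the free terms $W''(x)$ (resp.\ $W'''(x)$) cancelling exactly against the inhomogeneous remainders. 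The resulting coefficients match the statement: in case (1), $W'(0)-\gamma_2 W(0)=\Upsilon_2\gamma_2-\gamma_2(\Upsilon_1+\Upsilon_2)=-\Upsilon_1\gamma_2$; in case (2), $W''(0)-(\gamma_2+\gamma_3)W'(0)=\Upsilon_1\gamma_2\gamma_3$ and $\Upsilon_2(\gamma_2-\gamma_3)-\Upsilon_1\gamma_3=\Upsilon_2\gamma_2+\Upsilon_3\gamma_3=W'(0)$, using $\Upsilon_1+\Upsilon_2+\Upsilon_3=W(0)=0$ when $\sigma>0$. The same computation applied to \eqref{scaleFunctionZ} gives the identical ODE because the inhomogeneous term $1$ contributes nothing after the first differentiation, and the initial conditions fall out of evaluating the successive derivatives at $x=0$. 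What your approach buys is a proof that does not rely on the external reference; the only points worth making explicit are the justification for repeated differentiation (continuity of $\mathcal{W}^{(\xi)}$ as the Volterra solution, smoothness of $W$ on $[0,\infty)$ as a finite exponential sum, and $\xi\in C^1$) and the implicit assumption that the roots of $\psi(\theta)=0$ are real and simple so that \eqref{W^q(x)} holds in the stated form.
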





\section{Option pricing -- analytical approach}\label{sec:optionpricinganalytical}

In this section we present some examples of discount functions for which we are able to determine the analytical form of the value function $V^{\omega}_{\text{\rm A}^{\text{\rm Put}}}(s)$.

\subsection{Constant discount function}\label{constantOmegaNumerical}
The case when $\omega$ function is constant, i.e. $\omega(s)=q$ is the standard example which appears in the literature quite extensively.
However, this case is quite special, as it turns out that the second term of the sum in \eqref{sigma>0} simplifies and we do not need to deal with the measure $\mathbb{P}^{\alpha}$ (and thus to calculate the limit for $\alpha\rightarrow\infty$) to find $V^{\omega}_{\text{\rm A}^{\text{\rm Put}}}(s)$. This fact is stated in the below theorem.

\begin{Theorem}\label{TheoremAboutIdentityForConstantOmega}
Assume that $\omega(s)=q$.
Then
\begin{equation}\label{identityForConstantOmega}
\begin{aligned}
&\lim_{\alpha\rightarrow\infty} \left(\frac{s}{u}\right)^{\alpha}\left(\mathscr{Z}^{(q-\psi(\alpha))}_{\alpha}\left(\frac{s}{u}\right) -
c_{\mathscr{Z}^{(q-\psi(\alpha))}_{\alpha}/\mathscr{W}^{(q-\psi(\alpha))}_{\alpha}}
\mathscr{W}^{(q-\psi(\alpha))}_{\alpha}\left(\frac{s}{u}\right)\right) \\ &\qquad= \frac{\sigma^2}{2}\left(\mathscr{W}^{(q)\prime}\left(\frac{s}{u}\right) - \Phi(q)\mathscr{W}^{(q)}\left(\frac{s}{u}\right)\right).
\end{aligned}
\end{equation}

\end{Theorem}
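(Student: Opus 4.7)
The plan is to reduce everything to classical $(q-\psi(\alpha))$-scale functions of the Esscher-tilted process and expand them via the explicit formulas \eqref{W^q(x)}--\eqref{Z^q(x)}. Since $\omega\equiv q$, the functions $\omega_u^\alpha\equiv q-\psi(\alpha)$ and $\omega^\alpha\equiv q-\psi(\alpha)$ are constant, so $\mathscr{W}^{(\omega_u^\alpha)}_\alpha(z)=W_\alpha^{(q-\psi(\alpha))}(\log z)$ and similarly for $\mathscr{Z}$. The standard Esscher identity for scale functions (Kyprianou, Cor.~3.10, already invoked in the paper) gives
$$W_\alpha^{(q-\psi(\alpha))}(x)=e^{-\alpha x}W^{(q)}(x), \qquad\text{equivalently}\qquad z^{\alpha}\mathscr{W}^{(q-\psi(\alpha))}_\alpha(z)=\mathscr{W}^{(q)}(z),$$
for every $\alpha>\Phi(q)$.

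The first substantive step is to produce the analogous clean expansion for $Z_\alpha^{(q-\psi(\alpha))}$. The real roots of $\psi^{(\alpha)}(\theta)=q-\psi(\alpha)$ are $\{\theta_0-\alpha:\theta_0\in\{\gamma_1,\gamma_2,\Phi(q)\}\}$, and $(\psi^{(\alpha)})'(\theta_0-\alpha)=\psi'(\theta_0)$, so plugging into \eqref{Z^q(x)} and multiplying by $e^{\alpha x}$ yields
$$e^{\alpha x}Z_\alpha^{(q-\psi(\alpha))}(x) = e^{\alpha x} + \sum_{\theta_0}\beta_{\theta_0}(\alpha)\bigl(e^{\theta_0 x}-e^{\alpha x}\bigr), \qquad \beta_{\theta_0}(\alpha):=\frac{\psi(\alpha)-q}{(\alpha-\theta_0)\psi'(\theta_0)}.$$
The seemingly divergent $e^{\alpha x}$-terms cancel via the partial-fraction identity $\sum_{\theta_0}\beta_{\theta_0}(\alpha)=1$, which is obtained by evaluating the Laplace transform \eqref{laplaceTransform} at $\theta=\alpha$ against the explicit form \eqref{W^q(x)}, leaving $e^{\alpha x}Z_\alpha^{(q-\psi(\alpha))}(x)=\sum_{\theta_0}\beta_{\theta_0}(\alpha)e^{\theta_0 x}$. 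Reading off the $x\to\infty$ asymptotics (dominant exponent $\Phi(q)-\alpha$ in both $Z_\alpha$ and $W_\alpha$) then gives
$$c_\alpha := c_{\mathscr{Z}^{(q-\psi(\alpha))}_\alpha/\mathscr{W}^{(q-\psi(\alpha))}_\alpha} = \psi'(\Phi(q))\,\beta_{\Phi(q)}(\alpha) = \frac{\psi(\alpha)-q}{\alpha-\Phi(q)}.$$

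The remainder is algebra. Combining the two expansions,
$$z^{\alpha}\bigl(\mathscr{Z}^{(q-\psi(\alpha))}_\alpha(z)-c_\alpha\mathscr{W}^{(q-\psi(\alpha))}_\alpha(z)\bigr) = \sum_{\theta_0\ne\Phi(q)}\frac{(\psi(\alpha)-q)\,(\theta_0-\Phi(q))}{\psi'(\theta_0)\,(\alpha-\theta_0)\,(\alpha-\Phi(q))}\,z^{\theta_0},$$
so the $\theta_0=\Phi(q)$ mode is annihilated by construction. Since $\psi(\alpha)\sim\tfrac{\sigma^{2}}{2}\alpha^{2}$ by \eqref{laplaceExponent}, each prefactor tends to $\sigma^{2}/2$, and the resulting finite sum collapses via $W^{(q)}(x)=\sum_{\theta_0} e^{\theta_0 x}/\psi'(\theta_0)$ and $W^{(q)\prime}(x)=\sum_{\theta_0}\theta_0\,e^{\theta_0 x}/\psi'(\theta_0)$ to $\tfrac{\sigma^{2}}{2}\bigl(\mathscr{W}^{(q)\prime}(z)-\Phi(q)\mathscr{W}^{(q)}(z)\bigr)$, which is \eqref{identityForConstantOmega}.

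The principal obstacle is the cancellation of the divergent $e^{\alpha x}$-terms in the second step: without the partial-fraction identity, the expansion of $Z_\alpha^{(q-\psi(\alpha))}$ carries an $e^{\alpha x}$-part that would have to be matched piece by piece against the $c_\alpha W_\alpha$-expansion. Once that cancellation is secured, the only asymptotic input needed is the quadratic growth of $\psi$ coming from the Brownian component, and the same argument applies verbatim to the two-root regimes $\sigma=0,\,\lambda>0$ and $\lambda=0,\,\sigma>0$ (by summing over the corresponding smaller root set), with the $\sigma=0$ case yielding the trivially vanishing limit consistent with the $\sigma^2/2$ prefactor.
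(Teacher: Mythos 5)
Your argument is correct, but it takes a genuinely different route from the paper's. The paper proves \eqref{identityForConstantOmega} probabilistically: it identifies the limit as $\mathbb{E}_{s/u}\left[e^{-q\tau_0^-};\,\tau_0^-<\tau_b^+,\,X_{\tau_0^-}=0\right]$, the value of \emph{creeping} (continuous downward passage) into $(0,u]$, then quotes the Loeffen--Renaud--Zhou formula together with $W^{(q)\prime}(0+)=2/\sigma^2$, and finishes by letting $b\uparrow\infty$ with L'Hospital to produce $\Phi(q)$. You instead stay entirely on the analytic side: the Esscher identity $W^{(q)}(x)=e^{\alpha x}W^{(q-\psi(\alpha))}_\alpha(x)$, the explicit hyperexponential expansions \eqref{W^q(x)}--\eqref{Z^q(x)} transported to the tilted process (roots $\theta_0-\alpha$ with $(\psi^{(\alpha)})'(\theta_0-\alpha)=\psi'(\theta_0)$), the partial-fraction identity $\sum_{\theta_0}\beta_{\theta_0}(\alpha)=1$ that cancels the $e^{\alpha x}$ terms, the closed form $c_\alpha=(\psi(\alpha)-q)/(\alpha-\Phi(q))$, and finally $\psi(\alpha)\sim\tfrac{\sigma^2}{2}\alpha^2$. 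I checked the individual steps — the root correspondence, the annihilation of the $\Phi(q)$-mode by the choice of $c_\alpha$, and the collapse of the remaining sum to $\tfrac{\sigma^2}{2}\left(\mathscr{W}^{(q)\prime}(z)-\Phi(q)\mathscr{W}^{(q)}(z)\right)$ — and they are sound. The trade-off: the paper's proof is essentially model-free for spectrally negative processes with a Gaussian component and is short given the external references, whereas yours is self-contained within the exponential-jump model of \eqref{X_t}, makes transparent both why the prefactor is $\sigma^2/2$ (quadratic growth of $\psi$ from the Brownian part) and why the $\Phi(q)$ mode drops out, and produces the explicit constant $c_\alpha$ as a by-product, which is of independent numerical value; your closing consistency check that the limit vanishes when $\sigma=0$ is also correct.
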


\begin{proof}
Note that
\begin{equation}\label{limitInTheProof}
\lim_{\alpha\rightarrow\infty} \left(\frac{s}{u}\right)^{\alpha}\left(\mathscr{Z}^{(q-\psi(\alpha))}_{\alpha}\left(\frac{s}{u}\right) -
c_{\mathscr{Z}^{(q-\psi(\alpha))}_{\alpha}/\mathscr{W}^{(q-\psi(\alpha))}_{\alpha}}
\mathscr{W}^{(q-\psi(\alpha))}_{\alpha}\left(\frac{s}{u}\right)\right)
\end{equation}
corresponds to the continuous transition of the process $S_t$ to the interval $(0, u]$ or, in other words, continuous exit from half-line $(u,\infty)$.
We define
\begin{equation*}
\tau_b^+ = \inf\{t\geq 0: X_t \geq b\}, \quad \tau_0^- = \inf\{t\geq 0: X_t \leq 0\}.
\end{equation*}
It turns out that formula \eqref{limitInTheProof} is equivalent to
\begin{equation*}
\mathbb{E}_{\frac{s}{u}}\left[e^{-q\tau_0^-}; \tau_0^-<\tau_b^+, X_{\tau_0^-}=0\right],
\end{equation*}
for details see \cite[Proof of Theorem 7]{hadadPalma}.

Then using \cite[(13)]{Loeffen} for $x = \log\left(\frac{s}{u}\right)$, $a = 0$ and $v^{(q)}(x) = {W}^{(q)\prime}(x)$ together with the fact that ${W}^{(q)\prime}(0) = \frac{2}{\sigma^2}$ (see \cite[8.5 (ii), p. 235]{kyprianou}) we obtain
\begin{equation*}
\mathbb{E}_{\frac{s}{u}}\left[e^{-q\tau_0^-}; \tau_0^-<\tau_b^+, X_{\tau_0^-}=0\right]=
\frac{\sigma^2}{2} \left({W}^{(q)\prime}(x-\log u) - \frac{{W}^{(q)\prime}(b)}{{W}^{(q)}(b)}{W}^{(q)}(x-\log u)\right).
\end{equation*}
Lastly, taking limit $b\uparrow \infty$ and applying L'Hospital's Rule complete the proof.
\end{proof}

Ultimately, value function \eqref{sigma>0} for the constant discount function $\omega(s) = q$ can be written as
\begin{equation*}
\begin{aligned}
&V^{\omega}_{\text{\rm A}^{\text{\rm Put}}}(s) = \sup_{u>0}\bigg\{\left(K -
\frac{u \varphi}{\varphi+1}\right)\left({\mathscr{Z}}^{(q)}\left(\frac{s}{u}\right)- c_{{\mathscr{Z}}^{(q)}/{\mathscr{W}}^{(q)}} {\mathscr{W}}^{(q)}\left(\frac{s}{u}\right)\right)
\\ &+ (K-u)
\frac{\sigma^2}{2}\left({\mathscr{W}}^{(q)\prime}\left(\frac{s}{u}\right)-\Phi(q){{\mathscr{W}}}^{(q)}\left(\frac{s}{u}\right)\right)\bigg\}.
\end{aligned}
\end{equation*}

\begin{Remark}\label{BSConstandOmega}\rm
For the case of $\lambda = 0$, using \eqref{identityForConstantOmega}, one can show that value function \eqref{lambda=0} simplifies to the well known formula for the value function in the Black-Scholes model, i.e.
\begin{equation*}
V^{\omega}_{\text{\rm A}^{\text{\rm Put}}}(s) = \sup_{u>0}\bigg((K-u) \left(\frac{s}{u}\right)^{-\frac{2r}{\sigma^2}}\bigg),
\end{equation*}
where we substituted $q = r$ and $\zeta = r - \frac{\sigma^2}{2}$. Therefore, we are not forced to apply the smooth fit condition in order to find the optimal value of $u$. We can do this analytically by finding the maximum of $V^{\omega}_{\text{\rm A}^{\text{\rm Put}}}(s)$ with respect to $u$.

\end{Remark}

Figure \ref{ThreeDifferentq} presents value function \eqref{sigma>0} for three different values of $q$, i.e. $q\in\{0.3, 0.6, 0.9\}$.

\begin{figure}[ht]
\centering
\includegraphics[width=13cm]{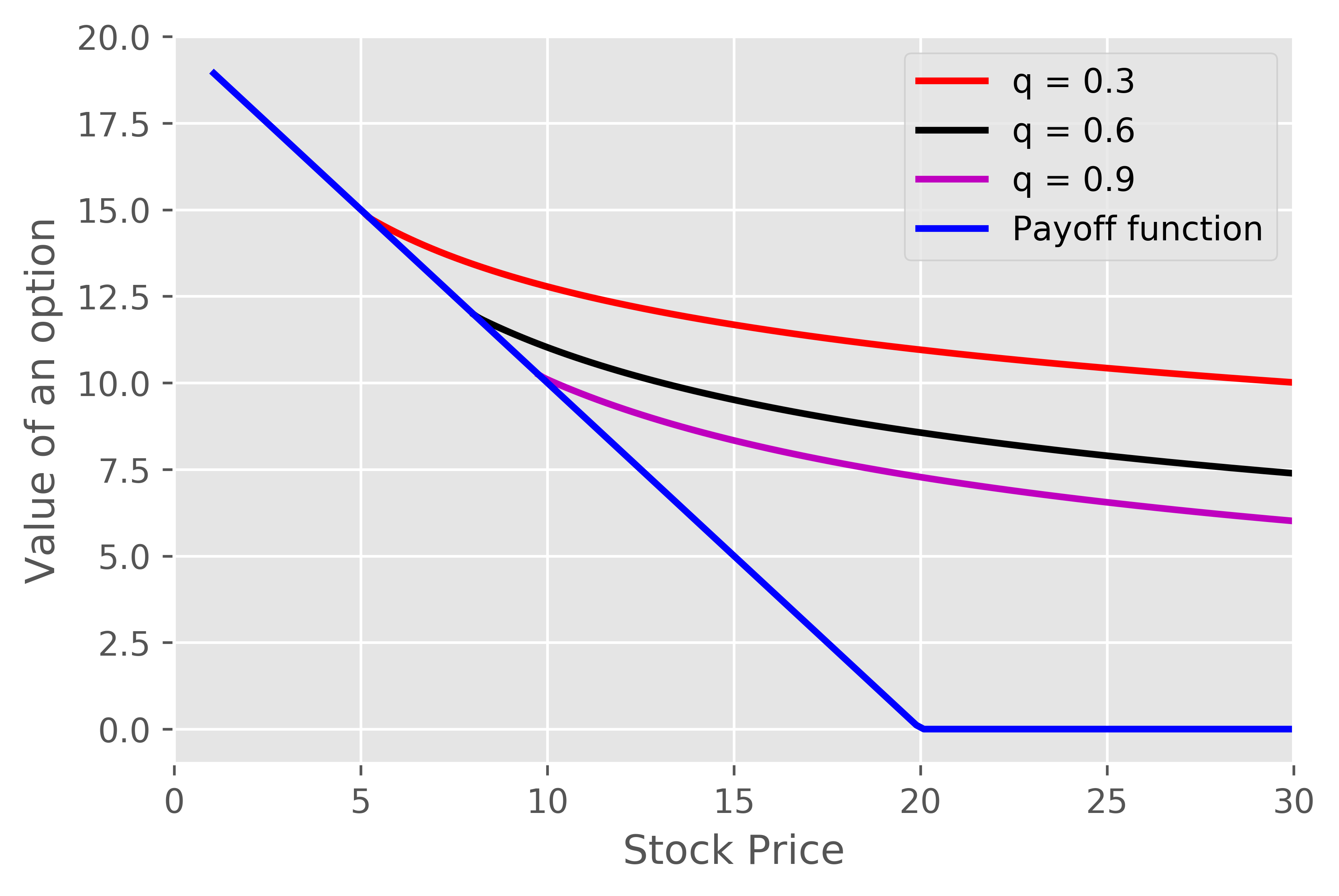}
\caption{The value and payoff functions for the given set of parameters: $K=20$, $r=0.05$, $\sigma=0.2$, $\lambda = 6$, $\varphi = 2$ and $q\in\{0.3, 0.6, 0.9\}$.}
\label{ThreeDifferentq}
\end{figure}

Based on Figure \ref{ThreeDifferentq} we can simply note that a higher value of the discount function $\omega$ results in a smaller value of $V^{\omega}_{\text{\rm A}^{\text{\rm Put}}}(s)$ which is in line with the financial intuition.

In turn, Figure \ref{ThreeModelsForGivenq} shows a comparison of \eqref{sigma=0}, \eqref{lambda=0} and \eqref{sigma>0} for the same value of $q = 0.5$.
\begin{figure}[ht]
\centering
\includegraphics[width=13cm]{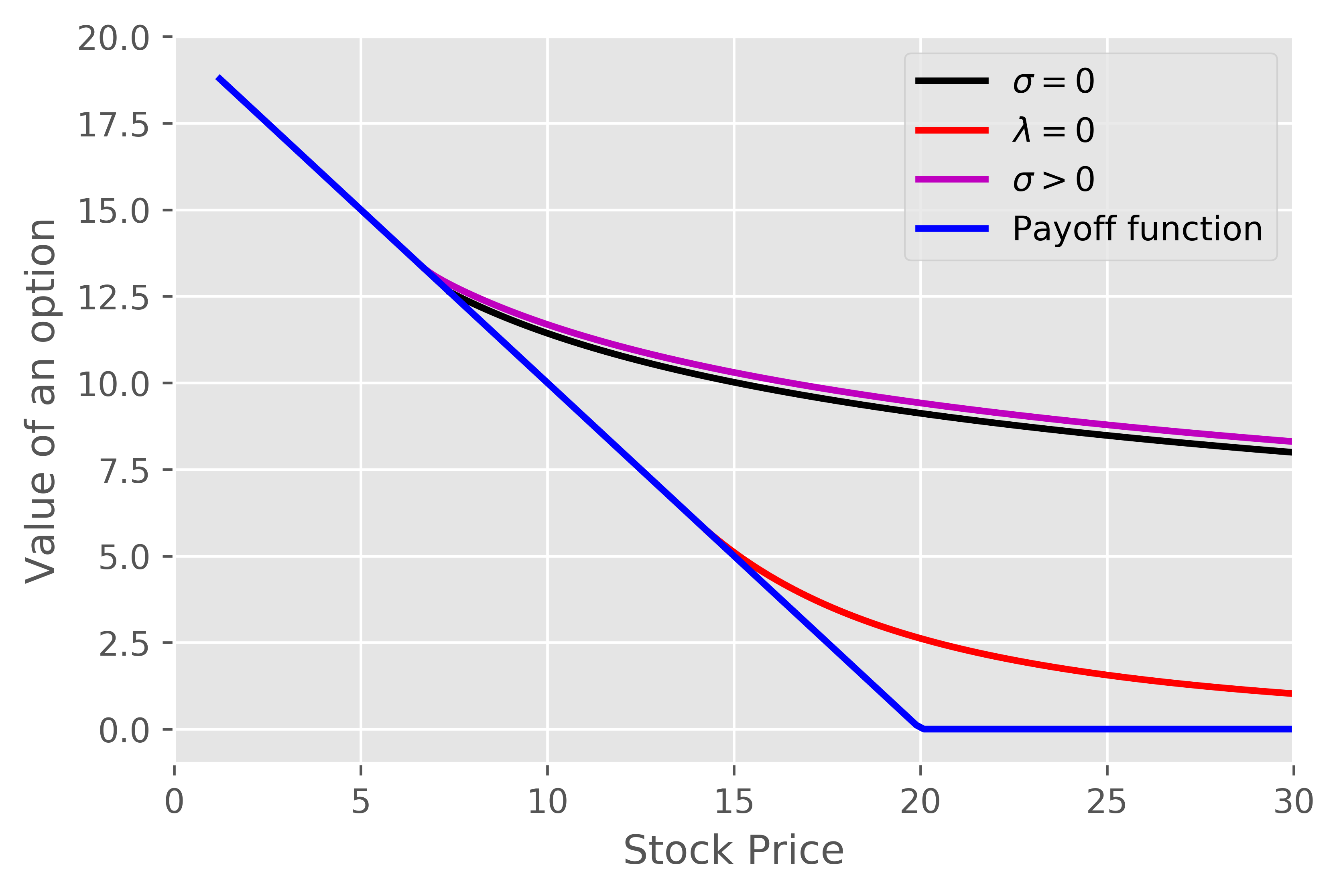}
\caption{The value and payoff functions for the given set of parameters: $K=20$, $r=0.05$, $\sigma=0.4$, $\lambda = 6$, $\varphi = 2$ and $q = 0.5$.}
\label{ThreeModelsForGivenq}
\end{figure}

The resulting relation between these functions is again consistent with the economical expectations.

\subsection{Linear discount function}\label{linearOmegaNumerical}

In this subsection we consider a linear discount function of the form $\omega(s) = Cs$ for some positive constant $C$.

\subsubsection{{\bf $\sigma = 0$}}\label{sigma0Subsection}


Let us consider the case of $\sigma=0$. Then the asset price process $S_t$ jumps into the interval $(0,u]$, which means
from Theorem \ref{mainTheorem} that
\begin{equation}\label{valueFunctionSigma0Ins}
V^{\omega}_{\text{\rm A}^{\text{\rm Put}}}(s)=\sup_{u>0} \left(K -
\frac{u\varphi}{\varphi+1}\right)\left(\mathscr{Z}^{(\omega_u)}\left(\frac{s}{u}\right)- c_{\mathscr{Z}^{(\omega)}/\mathscr{W}^{(\omega)}} \mathscr{W}^{(\omega_u)}\left(\frac{s}{u}\right)\right),
\end{equation}
where $\omega_u\left(\frac{s}{u}\right) = \omega(s) = Cs$.
Equivalently, \eqref{valueFunctionSigma0Ins} can be rewritten as
\begin{equation}\label{valueFunctionSigma0Inx}
V^{\eta}_{\text{\rm A}^{\text{\rm Put}}}(x)=\sup_{u>0} \left(K -
\frac{u\varphi}{\varphi+1}\right)\left(\mathcal{Z}^{(\eta_u)}\left(x - \log u\right)- c_{\mathcal{Z}^{(\eta)}/\mathcal{W}^{(\eta)}}\mathcal{W}^{(\eta_u)}\left(x - \log u\right)\right),
\end{equation}
where $x = \log s$ and $\eta_u(x - \log u) = \eta(x) = C e^x$.

To find a closed form of value function \eqref{valueFunctionSigma0Inx} we need to identify the scale functions $\mathcal{W}^{(\eta_u)}(x - \log u)$ and $\mathcal{Z}^{(\eta_u)}(x - \log u)$.
From Theorem \ref{mainTheoremODE} it follows that both ${\mathcal{W}^{(\eta)}}(x)$ and ${\mathcal{Z}^{(\eta)}}(x)$ solve the following ordinary differential equation
\begin{equation}\label{ODE_BSwithJumps}
f''(x) = (A e^x + B)f'(x) + D e^x f(x)
\end{equation}
with $A = \frac{C}{\zeta}$, $B = \frac{\lambda-\varphi\zeta}{\zeta}$ and $D = C\frac{1 + \varphi}{\zeta}$, while the initial conditions are as follows
\begin{equation}\label{initialW}
\begin{cases}
{\mathcal{W}^{(\eta)}}(0) = \frac{1}{\zeta}, \\
{\mathcal{W}^{(\eta)}}'(0) = \frac{C+\lambda}{\zeta^2}
\end{cases}
\end{equation}
and
\begin{equation}\label{initialZ}
\begin{cases}
{\mathcal{Z}^{(\eta)}}(0) = 1, \\
{\mathcal{Z}^{(\eta)}}'(0) = \frac{C}{\zeta}.
\end{cases}
\end{equation}
Substituting $t = Ae^x$ and $F(t) = f(x)$ to \eqref{ODE_BSwithJumps} we obtain the Kummer's equation of the form
\begin{equation}\label{KummersEquation}
t F''(t) + (b - t) F'(t) - a F(t) = 0,
\end{equation}
where $b = 1 - B$ and $a = \frac{D}{A}$.

If $b$ is not an integer, then the general solution to \eqref{KummersEquation} has the form
\begin{equation}\label{KummersEquationSolution}
F(t) = K_1 {_1 F_1(a_1, b_1; t)} + K_2 t^{1-b} {_1 F_1(a_2, b_2; t)},
\end{equation}
where $K_1$ and $K_2$ are the constants that can be found based on the initial conditions, $a_1 = a$, $b_1 = b$, $a_2 = a-b+1$, $b_2 = 2-b$, while ${_1F_1}(\cdot, \cdot; \cdot)$ is the Kummer confluent hypergeometric function.

We denote by $K^W_1$, $K^W_2$ and $K^Z_1$, $K^Z_2$ the constants corresponding to ${\mathcal{W}^{(\eta)}}(x)$ and ${\mathcal{Z}^{(\eta)}}(x)$, respectively.
Using initial conditions \eqref{initialW} and \eqref{initialZ} we can simply calculate these constants for both ${\mathcal{W}^{(\eta)}}(x)$ and ${\mathcal{Z}^{(\eta)}}(x)$. By shifting these functions by $\log u$ we simply produce
$\mathcal{W}^{(\eta_u)}(x - \log u)$ and
$\mathcal{Z}^{(\eta_u)}(x - \log u)$.

The asymptotic behaviour of ${_1F_1}\left(a, b; t\right)$ for $t\rightarrow\infty$ is as follows
\begin{equation}\label{asymptoticOf1F1}
{_1 F_1(a, b; t)} = \frac{\Gamma(b)}{\Gamma(a)} e^t t^{a-b}\left[1+O\left(\frac{1}{t}\right)\right].
\end{equation}
Based on \eqref{asymptoticOf1F1} we calculate the constant $c_{\mathcal{Z}^{(\eta)}/\mathcal{W}^{(\eta)}}$ (or equivalenly $c_{\mathscr{Z}^{(\omega)}/\mathscr{W}^{(\omega)}}$) occuring in \eqref{valueFunctionSigma0Inx}. It has the following form
\begin{equation}\label{sig0Constant}
c_{\mathcal{Z}^{(\eta)}/\mathcal{W}^{(\eta)}} = \frac{K_1^Z \frac{\Gamma(b_1)}{\Gamma(a_1)}A^{a_1-b_1} + K_2^Z \frac{\Gamma(b_2)}{\Gamma(a_2)} A^{a_2-b_2}}{K_1^W \frac{\Gamma(b_1)}{\Gamma(a_1)}A^{a_1-b_1} + K_2^W \frac{\Gamma(b_2)}{\Gamma(a_2)} A^{a_2-b_2}}.
\end{equation}

Combining all the obtained results and substituting them into \eqref{valueFunctionSigma0Inx}, we can present value function \eqref{valueFunctionSigma0Inx} graphically for some sample parameter values.
This is done in Section \ref{sec:optionpricingnumerical}.

\subsubsection{{\bf $\lambda = 0$}}\label{lambda0Subsection}







Let us consider the case of $\lambda=0$. In this case the asset price process $S_t$ enters the interval $(0,u]$ in a continuous way only.
Therefore, from Theorem \ref{mainTheorem}
\begin{equation}\label{valueFunctionlam0Ins}
\begin{aligned}
V^{\omega}_{\text{\rm A}^{\text{\rm Put}}}(s) = \sup_{u>0}\bigg\{(K-u)
\left(\lim_{\alpha\rightarrow\infty} \left(\frac{s}{u}\right)^{\alpha}\left(\mathscr{Z}^{(\omega_u^\alpha)}_{\alpha}\left(\frac{s}{u}\right) -
c_{\mathscr{Z}^{(\omega^\alpha)}_{\alpha}/\mathscr{W}^{(\omega^\alpha)}_{\alpha}}
\mathscr{W}^{(\omega_u^\alpha)}_{\alpha}\left(\frac{s}{u}\right)\right)\right)\bigg\}
\end{aligned}
\end{equation}
which is equivalent to
\begin{equation}\label{valueFunctionlam0Inx}
V^{\eta}_{\text{\rm A}^{\text{\rm Put}}}(x) = \sup_{u>0}\bigg\{(K-u)
\left(\lim_{\alpha\rightarrow\infty}e^{\alpha(x - \log u)}\left(\mathcal{Z}^{(\eta_u^\alpha)}_{\alpha}(x-\log u) -
c_{\mathcal{Z}^{(\eta^\alpha)}_{\alpha}/\mathcal{W}^{(\eta^\alpha)}_{\alpha}}
\mathcal{W}^{(\eta_u^\alpha)}_{\alpha}(x - \log u)\right)\right)\bigg\}.
\end{equation}
It suffices to find now $\mathcal{W}^{(\eta_u^\alpha)}_{\alpha}(x - \log u)$ and
$\mathcal{Z}^{(\eta_u^\alpha)}_{\alpha}(x - \log u)$.
From Theorem \ref{mainTheoremODE} it follows that $\mathcal{W}^{(\eta^\alpha)}_{\alpha}(x)$ and $\mathcal{Z}^{(\eta^\alpha)}_{\alpha}(x)$ solve
\begin{equation}\label{ODE_BSwithoutJumps}
f''(x) = B_{\alpha} f'(x) + (D_{\alpha} e^x + E_{\alpha}) f(x)
\end{equation}
with $B_{\alpha} = -\frac{2}{\sigma^2}(\zeta+\sigma^2\alpha)$, $D_{\alpha} = \frac{2C}{\sigma^2}$ and $E_{\alpha} = -\frac{2}{\sigma^2}\left(\zeta\alpha + \frac{\sigma^2}{2}\alpha^2\right)$. The initial conditions have the following form
\begin{equation}\label{initialW2}
\begin{cases}
{\mathcal{W}^{(\eta^\alpha)}_{\alpha}}(0) = 0, \\
{\mathcal{W}^{(\eta^\alpha)}_{\alpha}}'(0) = \frac{2}{\sigma^2}
\end{cases}
\end{equation}
and
\begin{equation}\label{initialZ2}
\begin{cases}
{\mathcal{Z}^{(\eta_u^\alpha)}_{\alpha}}(0) = 1, \\
{\mathcal{Z}^{(\eta_u^\alpha)}_{\alpha}}'(0) = 0.
\end{cases}
\end{equation}
Substituting $t = 2\sqrt{-D_{\alpha}e^{x}}$ and $F(t) = e^{-\frac{B_{\alpha} x}{2}} f(x)$ to \eqref{ODE_BSwithoutJumps} we obtain the Bessel differential equation of the form
\begin{equation}\label{besselEquation}
t^2 F''(t) + t F'(t) + (t^2 - v^2)F(t) = 0,
\end{equation}
where $v = \sqrt{{B_{\alpha}}^2 + 4E_{\alpha}} = \frac{2\zeta}{\sigma^2}$.
The general solution to \eqref{besselEquation} is equal to
\begin{equation*}
F(t) = K_1 J_v(t) + K_2 Y_v(t)
\end{equation*}
and therefore
\begin{equation}\label{lambda0ScaleFunction}
f(x) = e^{\frac{B_{\alpha}x}{2}}\bigg(K_1 J_v(2\sqrt{-D_{\alpha}e^x}) + K_2 Y_v(2\sqrt{-D_{\alpha}e^x})\bigg).
\end{equation}
Based on the form of \eqref{lambda0ScaleFunction} and the fact that $D_{\alpha}$ does not depend on $\alpha$ we can simply note that value function \eqref{valueFunctionlam0Inx} is also independent of $\alpha$.
Therefore, we can take an arbitrary value of $\alpha$ in \eqref{ODE_BSwithoutJumps}. Thanks to this key observation, its solution \eqref{lambda0ScaleFunction} can take a simplified form. Indeed,
for $\alpha = 0$ equation \eqref{ODE_BSwithoutJumps} is equal to
\begin{equation}\label{132EquationAlpha0}
f''(x) = B_{0} f'(x) + D_{0} e^x f(x),
\end{equation}
where $B_{0} = \frac{-2\zeta}{\sigma^2}$ and $D_{0} = \frac{2C}{\sigma^2}$.
Hence, the general solution to \eqref{132EquationAlpha0} takes the following form
\begin{equation}\label{lambda0ScaleFunctionAlpha0}
f(x) = e^{\frac{B_{0}x}{2}}\bigg(K_1 J_v(2\sqrt{-D_{0}e^x}) + K_2 Y_v(2\sqrt{-D_{0}e^x})\bigg).
\end{equation}
For $B_0 = \frac{1}{2} - n$, where $n\in\mathbb{N}_0$ and $D_0 t>0$, equation \eqref{lambda0ScaleFunctionAlpha0} reduces to
\begin{equation*}
f(x) = K_1 \bigg(\cosh(4\sqrt{D_0 e^x})\bigg)^n+ K_2 \bigg(\sinh(4\sqrt{D_0 e^x})\bigg)^n.
\end{equation*}
If we take the following sample parameters $r = 0.05$ and $\sigma = 0.2$, we obtain $n=2$ and therefore
\begin{equation}\label{WalphaZalpha}
\begin{aligned}
f(x) &= K_1 \bigg(\frac{3 \sinh(2 \sqrt{e^x})}{4 e^{\frac{5}{2}x}} + \frac{\sinh(2 \sqrt{e^x})}{e^{\frac{3}{2}x}} - \frac{3 \cosh(2 \sqrt{e^x})}{2 e^{2x}}\bigg) \\ &+ K_2 \bigg(\frac{3\cosh(2 \sqrt{e^x})}{4 e^{\frac{5}{2}x}} + \frac{\cosh(2\sqrt{e^x})}{e^{\frac{3}{2}x}} - \frac{3 \sinh(2 \sqrt{e^x})}{2 e^{2x}}\bigg).
\end{aligned}
\end{equation}
Applying initial conditions \eqref{initialW2} and \eqref{initialZ2} we can simply obtain $K_1^W$, $K_2^W$ and $K_1^Z$, $K_2^Z$.
Using equality \eqref{WalphaZalpha} which hold for both $\mathcal{W}^{(\eta^\alpha)}_{\alpha}(x)$ and $\mathcal{Z}^{(\eta^\alpha)}_{\alpha}(x)$, we can calculate that
\begin{align}\label{constantLam0}
c_{\mathcal{Z}^{(\eta^\alpha)}_{\alpha}/\mathcal{W}^{(\eta^\alpha)}_{\alpha}} = c_{\mathcal{Z}^{(\eta)}/\mathcal{W}^{(\eta)}} = \frac{K_1^Z + K_2^Z}{K_1^W + K_2^W}.
\end{align}
Taking into account all the obtained results, we can obtain value function \eqref{valueFunctionlam0Inx} for sample data.
This is done in Section \ref{sec:optionpricingnumerical} as well.

\subsection{Power discount function}\label{powerOmegaNumerical}
This time, we take into account a power function of the form
$\omega(s) = C s^n$ for $n\in(0, 1]$ and $C$ being some positive contant. This case is a generalisation of a linear discount function.

\subsubsection{{\bf $\sigma = 0$}}

Similarly to the case of a linear discount function, the scale functions ${\mathcal{W}^{(\eta)}}(x)$ and ${\mathcal{Z}^{(\eta)}}(x)$ solve
\begin{equation}\label{ODE_BSwithJumps2}
f''(x) = (A e^{nx} + B)f'(x) + D e^{nx} f(x)
\end{equation}
with $A = \frac{C}{\zeta}$, $B = \frac{\lambda-\varphi\zeta}{\zeta}$ and $D = C\frac{n + \varphi}{\zeta}$, while the initial conditions are the same as those provided in \eqref{initialW} and \eqref{initialZ}.
Applying a substitution $t = \frac{A}{n} e^{nx}$ and $F(t) = f(x)$ we transform \eqref{ODE_BSwithJumps2} into
\begin{equation}\label{KummersEquation2}
t F''(t) + (b - t) F'(t) - a F(t) = 0,
\end{equation}
where $b = 1 - \frac{B}{n}$ and $a = \frac{D}{An}$.
The general solution to \eqref{KummersEquation2} has the same form as was provided in \eqref{KummersEquationSolution}.

Therefore, for both the linear and the power discount function $\omega$, the form of the value function $V^{\eta}_{\text{\rm A}^{\text{\rm Put}}}(x)$ is identical.

\subsubsection{{\bf $\lambda = 0$}}
As in the above case, the idea of finding a closed form of the value function can be borrowed from the linear case.
This time, the scale functions $\mathcal{W}^{(\eta^\alpha)}_{\alpha}(x)$ and $\mathcal{Z}^{(\eta^\alpha)}_{\alpha}(x)$ satisfy the equation
\begin{equation*}
f''(x) = B_{\alpha}f'(x) + (D_{\alpha} e^{nx} + E_{\alpha}) f(x)
\end{equation*}
with $B_{\alpha} = -\frac{2}{\sigma^2}(\zeta+\sigma^2\alpha)$, $D_{\alpha} = \frac{2C}{\sigma^2}$ and $E_{\alpha} = -\frac{2}{\sigma^2}\left(\zeta\alpha + \frac{\sigma^2}{2}\alpha^2\right)$, while the initial conditions are of the form \eqref{initialW2} and \eqref{initialZ2}.
If we substitute $t = \frac{2}{n}\sqrt{-D_{\alpha} e^{nx}}$ and $F(t) = e^{-\frac{B_{\alpha}x}{2}} f(x)$ we receive the Bessel differential equation for $F(t)$ with the solution
\begin{equation*}
F(t) = K_1 J_v(t) + K_2 Y_v(t).
\end{equation*}
Therefore, we have
\begin{equation}\label{lambda0PowerScaleFunction}
f(x) = e^{\frac{B_{\alpha}x}{2}}\bigg(K_1 J_v\bigg(\frac{2}{n}\sqrt{-D_{\alpha}e^{nx}}\bigg) + K_2 Y_v\bigg(\frac{2}{n}\sqrt{-D_{\alpha}e^{nx}}\bigg)\bigg),
\end{equation}
where $v = \frac{\sqrt{{B_{\alpha}}^2 + 4 E_{\alpha}}}{n} = \frac{2\zeta}{n \sigma^2}$.
We can show, as in the prevoius section, that the value function which arises in this scenario does not depend on $\alpha$. Thus, for $\alpha = 0$, \eqref{lambda0PowerScaleFunction} takes the form
\begin{equation*}
f(x) = e^{\frac{B_{0}x}{2}}\bigg(K_1 J_v\bigg(\frac{2}{n}\sqrt{-D_{0}e^{nx}}\bigg) + K_2 Y_v\bigg(\frac{2}{n}\sqrt{-D_{0}e^{nx}}\bigg)\bigg).
\end{equation*}
Again, having exact formulas for the scale functions, we can easily represent the form of the value function.
All results are presented in Section \ref{sec:optionpricingnumerical}.

\section{Option pricing -- numerical approach}\label{sec:optionpricingnumerical}

In this section we show how to numerically identify the value function $V^{\omega}_{\text{\rm A}^{\text{\rm Put}}}(s)$ for arbitrary discount function $\omega$.
We present some figures corresponding to the various discount functions as well.

\subsection{Different discount functions}
For some discount functions $\omega$ we are unable to find the analytical forms of $\mathcal{W}^{(\eta)}(x)$, $\mathcal{Z}^{(\eta)}(x)$, $\mathcal{W}^{(\eta^\alpha)}_{\alpha}(x)$ and $\mathcal{Z}^{(\eta^\alpha)}_{\alpha}(x)$ which are the solutions to the ordinary differential equations occurring in Theorem \ref{mainTheoremODE}. That is, formally we cannot identify explicitly the value function either.
In such a situation, we can proceed a numerical analysis of these equations.

In general, solving a high-order ordinary differential equation consists in transforming it into first-order vector form and then applying an appropriate algorithm that returns us the numerical solution of the $n+1-$dimensional system of first-order ordinary differential equations.
For practical purposes, however -- such as in financial engineering -- numeric approximations to the solutions of ordinary differential equations are often sufficient.
In this paper we focus on the Higher-Order Taylor Method.
This method employs the Taylor polynomial of the solution to the equation. It approximates the $0-$th order term by using the previous step's value (which is the initial condition for the first step), and the subsequent terms of the Taylor expansion by using the differential equation.


\subsubsection{{\bf $\sigma = 0$}}
For the case of $\sigma = 0$ we show in the previous section how to derive the value function for the linear discount function $\omega$.
Figure \ref{sig0_valueFunction} presents comparison of the value function given in \eqref{valueFunctionSigma0Ins} when the scale functions were calculated analytically (as shown in subsection \ref{sigma0Subsection}) and numerically by solving differential ordinary equation \eqref{ODE_BSwithJumps}. In this example, a linear discount function was chosen, i.e. $\omega(s) = Cs$.
The difference between these functions is so small that this is negligible.

\begin{figure}[ht]
\centering
\includegraphics[width=13cm]{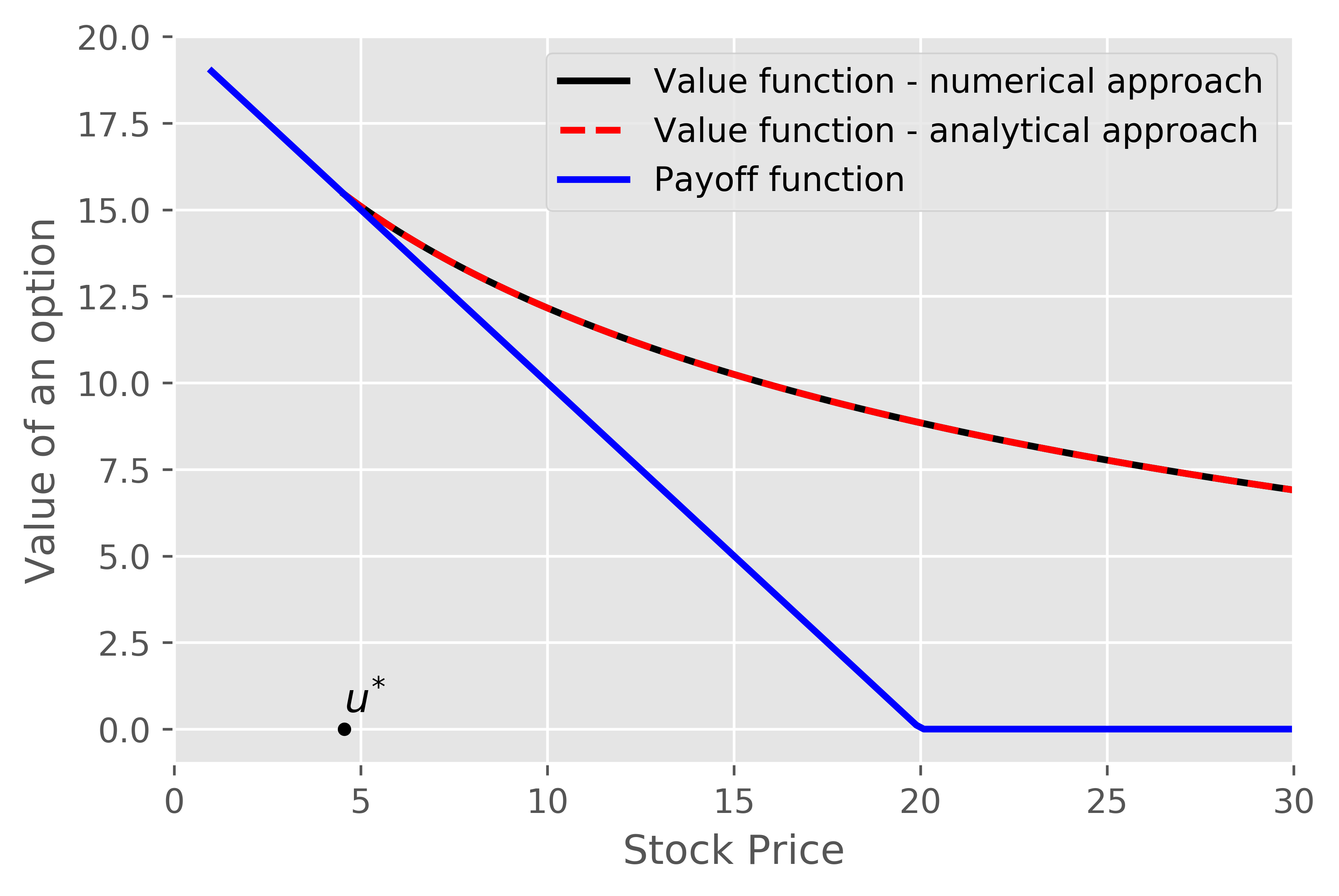}
\caption{Comparison of value function \eqref{valueFunctionSigma0Ins} for $\omega(s)=Cs$ for both methods of determining the scale functions -- analytical and numerical one. The chosen set of parameters is as follows: $K = 20$, $C = 0.1$, $r = 0.05$, $\lambda = 6$, $\varphi = 2$.}
\label{sig0_valueFunction}
\end{figure}

Moreover, Figure \ref{sig0_c} illustrates the constant $c_{\mathcal{Z}^{(\eta)}/\mathcal{W}^{(\eta)}}$ obtained in \eqref{sig0Constant} together with the quotient of the functions $\mathcal{Z}^{(\eta)}(x)$ and $\mathcal{W}^{(\eta)}(x)$.

\begin{figure}[ht]
\centering
\includegraphics[width=13cm]{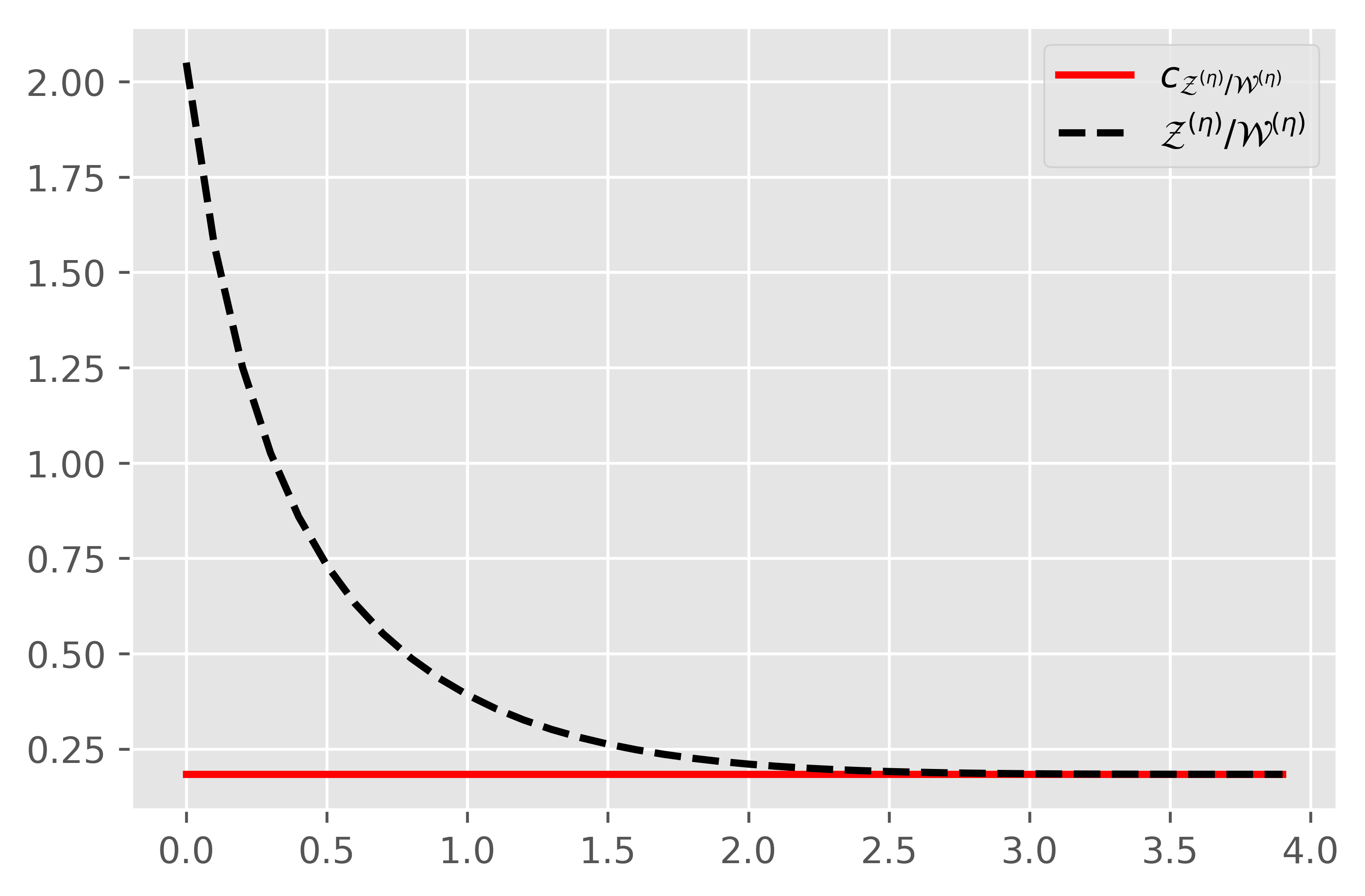}
\caption{Comparison of the constant $c_{\mathcal{Z}^{(\eta)}/\mathcal{W}^{(\eta)}}$ and the ratio of $\mathcal{Z}^{(\eta)}(x)$ and $\mathcal{W}^{(\eta)}(x)$ for a linear discount function $\omega(s)=Cs$ and $K = 20$, $C = 0.1$, $r = 0.05$, $\lambda = 6$, $\varphi = 2$.}
\label{sig0_c}
\end{figure}

As we mentioned at the beginning of this section, we are not always able to get an analytical solution to an ordinary differential equation.
This is the case for example when the discount function is of the form $\omega(s) = C \arctan(s)$ for some positive $C$.
Then we can only obtain the scale functions numerically.
Figure \ref{sig0_s_And_arctans} shows two value functions, for $\omega(s) = Cs$ and $\omega(s) = C \arctan(s)$, respectively.
Since for all positive $s$ we have $s>\arctan(s)$, then we expect that the value function corresponding to $\omega(s) = C \arctan(s)$ takes greater values rather than for $\omega(s) = Cs$. We can also note that the difference between these functions becomes greater for higher values of $s$, which is in line with economical intuition since the difference between $\omega(s) = s$ and $\omega(s) = \arctan (s)$ also extends as $s$ increases.

\begin{figure}[ht]
\centering
\includegraphics[width=13cm]{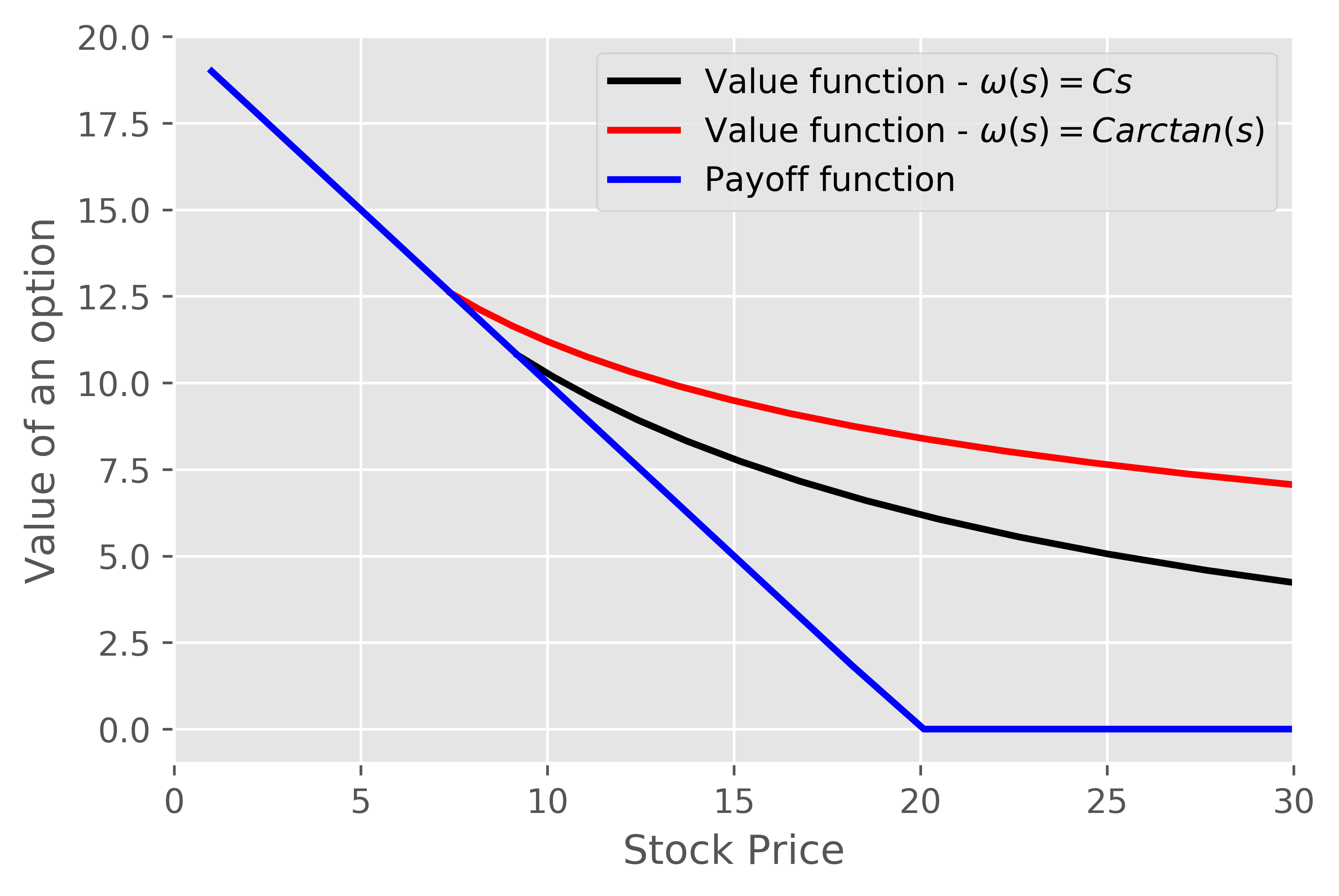}
\caption{Comparison of value function \eqref{valueFunctionSigma0Ins} for both $\omega(s)=Cs$ and $\omega = C\arctan(s)$ and for the given set of parameters: $K = 20$, $C = 0.5$, $r = 0.05$, $\lambda = 6$, $\varphi = 2$.}
\label{sig0_s_And_arctans}
\end{figure}

\subsubsection{{\bf $\lambda = 0$}}
The case of $\lambda = 0$ corresponds to the situation when stock price process \eqref{S_t} does not have any jumps.
Then the value function takes the form \eqref{lambda=0}.
From the numerical point of view, the problem lies in choosing a sufficiently large value of $\alpha$ in \eqref{lambda=0} to obtain the final and right form of the value function.
In this section we avoid this problem by selecting discount functions for which the value function is independent of the $\alpha$ parameter.

Figure \ref{lam0_valueFunction} presents comparison of the value function given in \eqref{valueFunctionlam0Ins} for
$\omega(s) = Cs$ and for the scale functions obtained analytically and numerically.
As shown in subsection \ref{lambda0Subsection}, in this case we can take an arbitrary value of $\alpha$ and obtain a simplified form of the value function and ordinary differential equation that the scale functions solve.
Similarly to the previous example, we again can observe a negligible difference between these two value functions.

\begin{figure}[ht]
\centering
\includegraphics[width=13cm]{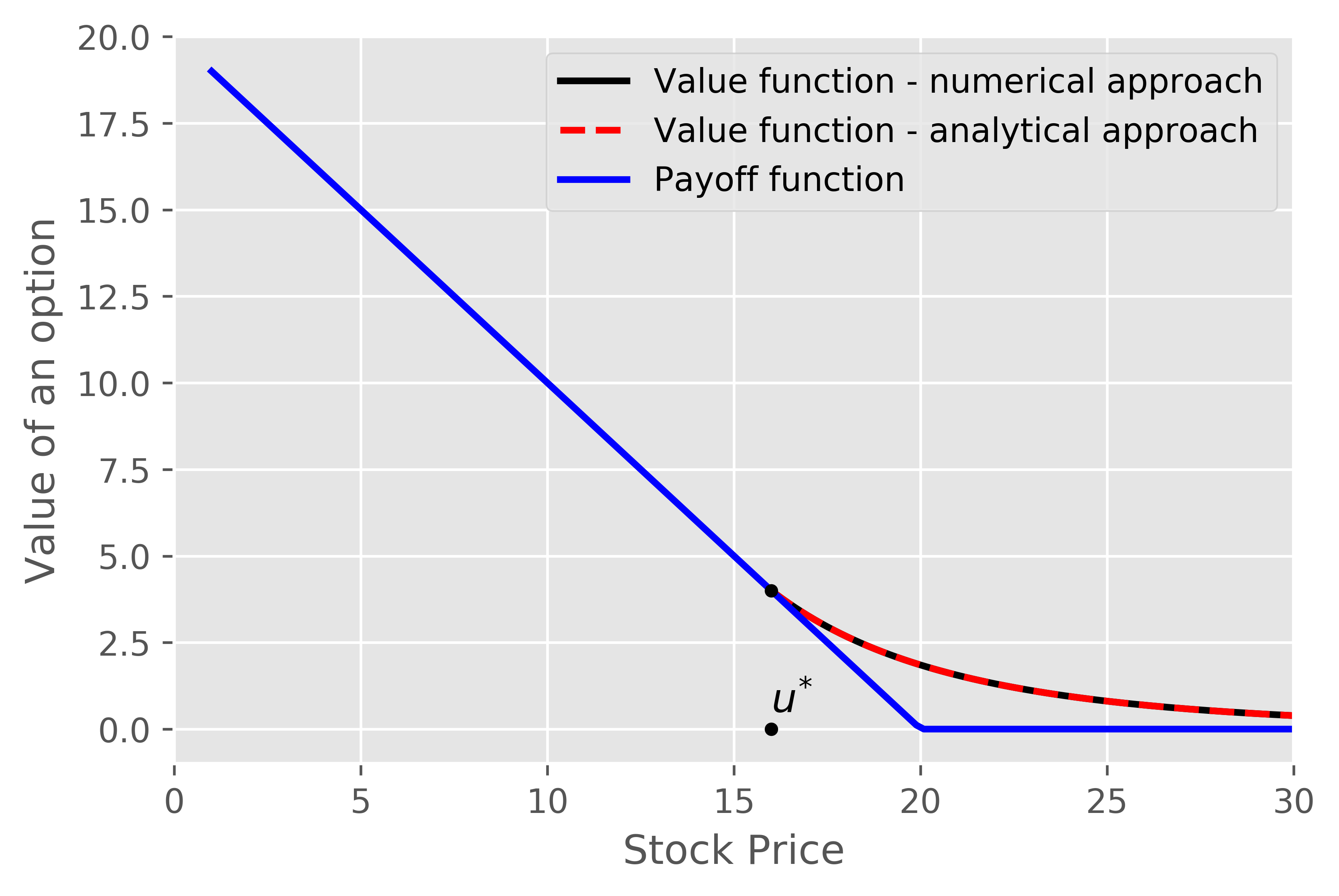}
\caption{Comparison of value function \eqref{valueFunctionlam0Ins} for $\omega(s)=Cs$ for both methods of determining the scale functions -- analytical and numerical one. The chosen set of parameters is as follows: $K = 20$, $C = 0.1$, $r = 0.05$, $\sigma = 0.2$.}
\label{lam0_valueFunction}
\end{figure}

In turn, Figure \ref{lam0_c} shows the constant $c_{\mathcal{Z}^{(\eta)}/\mathcal{W}^{(\eta)}}$ given in \eqref{constantLam0} with the ratio of $\mathcal{Z}^{(\eta)}(x)$ and $\mathcal{W}^{(\eta)}(x)$.

\begin{figure}[ht]
\centering
\includegraphics[width=13cm]{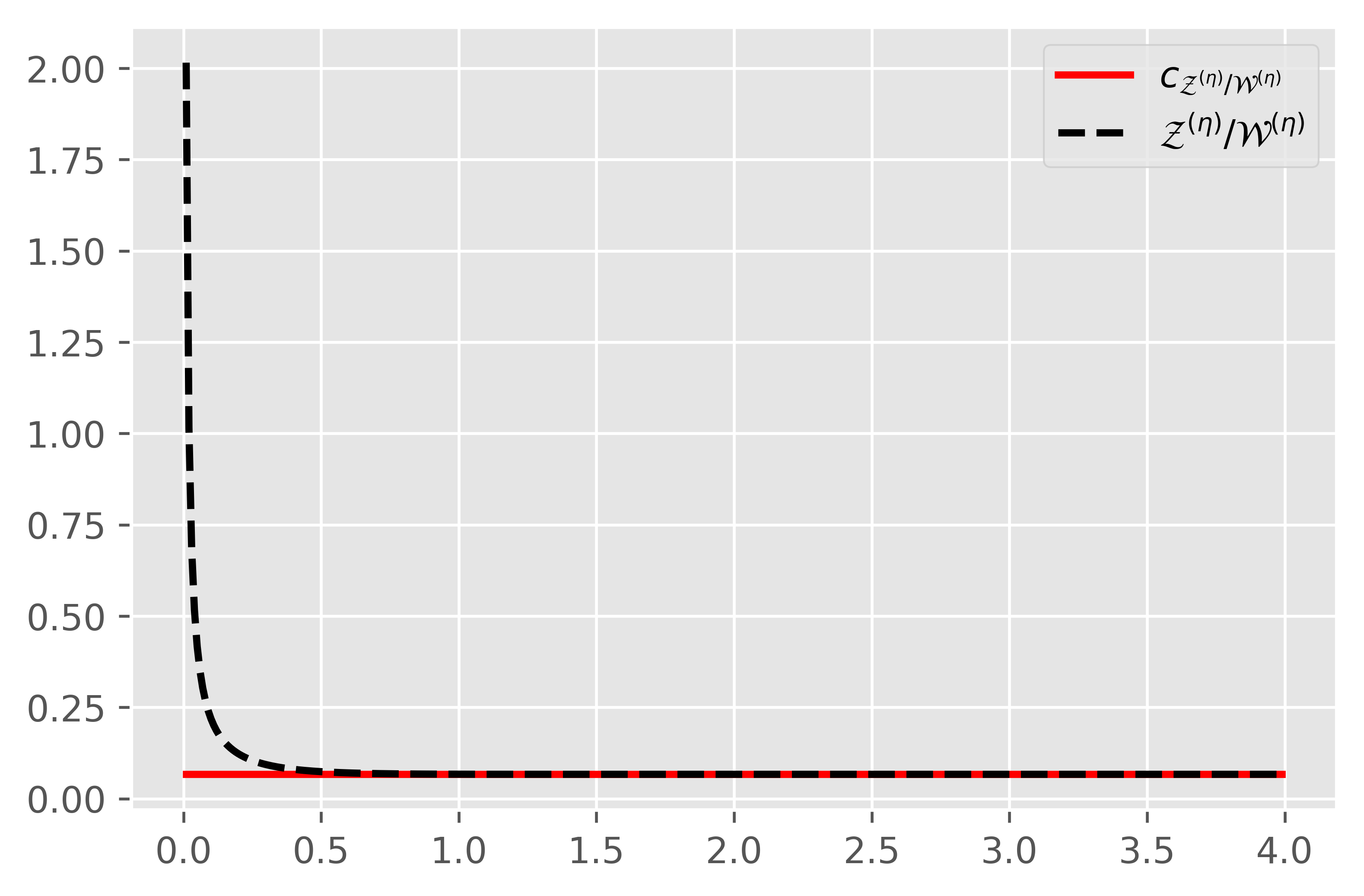}
\caption{Comparison of the constant $c_{\mathcal{Z}^{(\eta)}/\mathcal{W}^{(\eta)}}$ and the ratio of $\mathcal{Z}^{(\eta)}(x)$ and $\mathcal{W}^{(\eta)}(x)$ for a linear discount function $\omega(s)=Cs$ and $K = 20$, $C = 0.1$, $r = 0.05$, $\sigma = 0.2$.}
\label{lam0_c}
\end{figure}

Lastly, in Figure \ref{lam0_sqrts_and_asqrtsPlusZ} we can observe the value functions for both $\omega(s) = C\sqrt{s}$ and $\omega(s) = C\sqrt{s} + Z$ for some positive $Z$, i.e. we compare two discount functions that differ in shift.
This time, we can see that the value functions obtained in this way also differ only in shift, which confirms financial intuition.

\begin{figure}[ht]
\centering
\includegraphics[width=13cm]{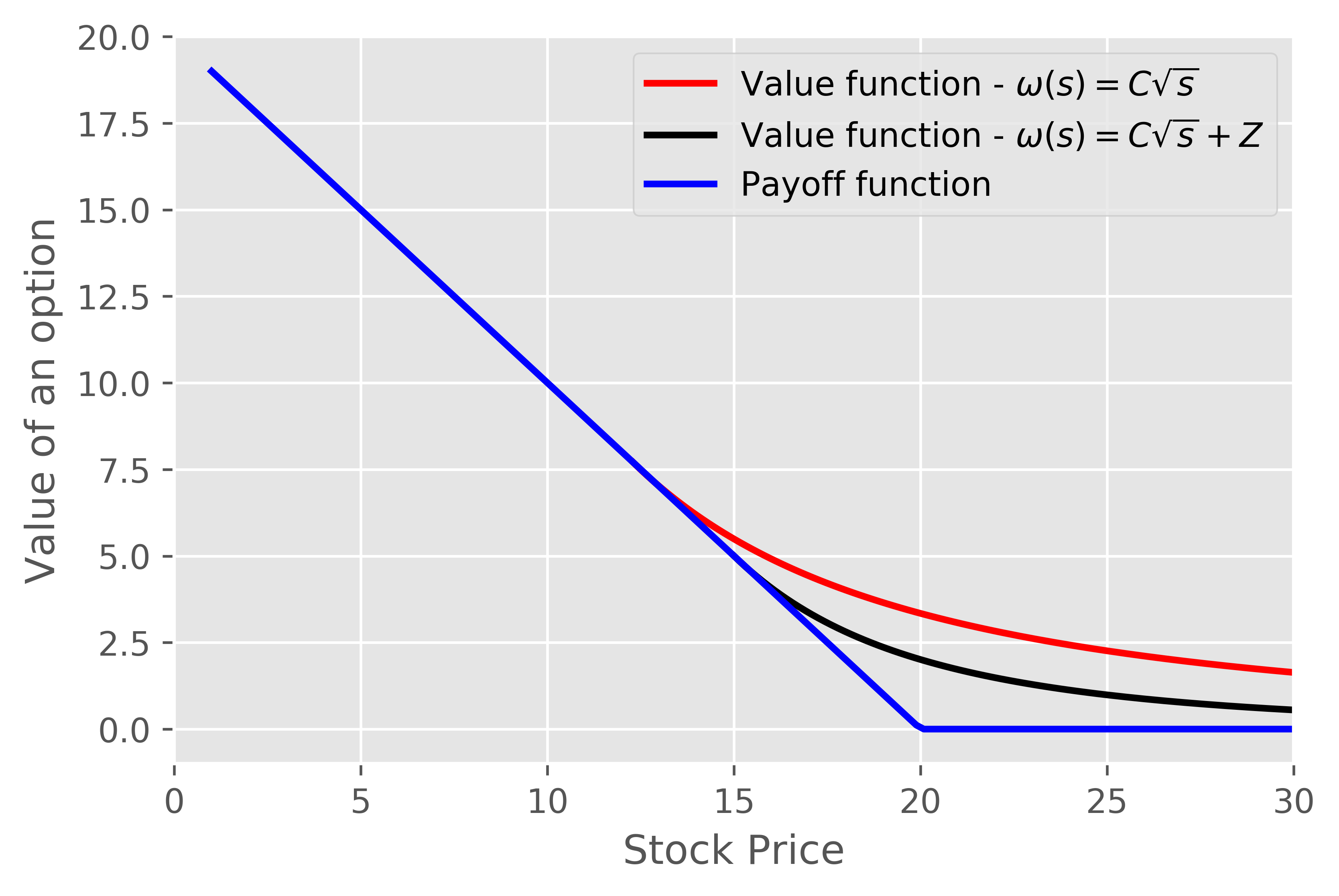}
\caption{Comparison of value function \eqref{valueFunctionlam0Ins} for both $\omega(s)=C\sqrt{s}$ and $\omega = C\sqrt{s}+Z$ and for the given set of parameters: $K = 20$, $C = 0.005$, $Z = 0.1$, $r = 0.05$, $\sigma = 0.2$.}
\label{lam0_sqrts_and_asqrtsPlusZ}
\end{figure}

\subsubsection{{\bf $\sigma > 0$ and $\lambda >0$}}
The most general case is when $\sigma>0$ and $\lambda>0$.
Then the considered value function $V^{\omega}_{\text{\rm A}^{\text{\rm Put}}}(s)$ is given by \eqref{sigma>0}. It can be also represented as the function of $x$ variable:
\begin{equation}\label{valueFunctionSigma>0Inx}
\begin{aligned}
&V^{\eta}_{\text{\rm A}^{\text{\rm Put}}}(x) = \sup_{u>0}\bigg\{\left(K - \frac{u \varphi}{\varphi+1}\right)\left(\mathcal{Z}^{(\eta_u)}(x - \log u)- c_{\mathcal{Z}^{(\eta)}/\mathcal{W}^{(\eta)}} \mathcal{W}^{(\eta_u)}(x - \log u)\right)
\\ &+ (K-u)
\left(\lim_{\alpha\rightarrow\infty} e^{\alpha(x-\log u)}\left(\mathcal{Z}^{(\eta_u^\alpha)}_{\alpha}(x - \log u) -
c_{\mathcal{Z}^{(\eta^\alpha)}_{\alpha}/\mathcal{W}^{(\eta^\alpha)}_{\alpha}}
\mathcal{W}^{(\eta_u^\alpha)}_{\alpha}(x - \log u)\right)\right)\bigg\}.
\end{aligned}
\end{equation}
For the linear discount function $\omega(s)=Cs$, the scale functions $\mathcal{W}^{(\eta)}(x)$ and $\mathcal{Z}^{(\eta)}(x)$ occurring in \eqref{valueFunctionSigma>0Inx} are the solutions to the following ordinary differential equation
\begin{equation}\label{thirdOrderODEf(x)}
\begin{split}
f'''(x) &= Af''(x) + (Be^x + D)f'(x) + E e^x f(x)
\end{split}
\end{equation}
with $A = \gamma_2 + \gamma_3$, $B = C\left[\Upsilon_2(\gamma_2-\gamma_3)-\Upsilon_1\gamma_3\right]$, $D = -\gamma_2\gamma_3$, $E = C\left[\Upsilon_2(\gamma_2-\gamma_3)+\Upsilon_1\gamma_2\gamma_3-\Upsilon_1\gamma_3\right]$.

The initial conditions for $\mathcal{W}^{(\eta)}(x)$ and $\mathcal{Z}^{(\eta)}(x)$ are as follows
\begin{equation*}
\begin{cases}
{\mathcal{W}^{(\eta)}}(0) = 0, \\
{\mathcal{W}^{(\eta)}}'(0) = \Upsilon_2\gamma_2 + \Upsilon_3\gamma_3, \\
{\mathcal{W}^{(\eta)}}''(0) = \Upsilon_2 {\gamma_2}^2 + \Upsilon_3{\gamma_3}^2
\end{cases}
\end{equation*}
and
\begin{equation*}
\begin{cases}
{\mathcal{Z}^{(\eta)}}(0) = 1, \\
{\mathcal{Z}^{(\eta)}}'(0) = 0, \\
{\mathcal{Z}^{(\eta)}}''(0) = C\left[\Upsilon_2(\gamma_2-\gamma_3) - \Upsilon_1\gamma_3\right].
\end{cases}
\end{equation*}
Thus $\mathcal{W}^{(\eta^\alpha)}_{\alpha}(x)$ and $\mathcal{Z}^{(\eta^\alpha)}_{\alpha}(x)$ solve
\begin{equation}\label{thirdOrderODEf(x)2}
\begin{split}
f'''(x) &= A_{\alpha}f''(x) + (B_{\alpha}e^x + D_{\alpha})f'(x) + (E_{\alpha} e^x + F_{\alpha})f(x)
\end{split}
\end{equation}
with $A_{\alpha} = \gamma_{{\alpha}_2} + \gamma_{{\alpha}_3}$, $B_{\alpha} = C\left[\Upsilon_{{\alpha}_2}(\gamma_{{\alpha}_2}-\gamma_{{\alpha}_3})-\Upsilon_{{\alpha}_1}\gamma_{{\alpha}_3}\right]$, $D_{\alpha} = -\Upsilon_{{\alpha}_2}(\gamma_{{\alpha}_2}-\gamma_{{\alpha}_3})\psi(\alpha)-\gamma_2\gamma_3+\Upsilon_{{\alpha}_1}\gamma_{{\alpha}_3}\psi(\alpha)$, $E_{\alpha} = C\left[\Upsilon_{\alpha_2}(\gamma_{\alpha_2}-\gamma_{\alpha_3})+\Upsilon_{\alpha_1}\gamma_{\alpha_2}\gamma_{\alpha_3}-\Upsilon_{\alpha_1}\gamma_{\alpha_3}\right]$, $F_{\alpha} = -\Upsilon_{\alpha_1}\gamma_{\alpha_2}\gamma_{\alpha_3}\psi(\alpha)$.
The initial conditions for ${\mathcal{W}^{(\eta^\alpha)}_{\alpha}}(x)$ and ${\mathcal{Z}^{(\eta^\alpha)}_{\alpha}}(x)$ are as follows
\begin{equation*}
\begin{cases}
{\mathcal{W}^{(\eta^\alpha)}_{\alpha}}(0) = 0, \\
{\mathcal{W}^{(\eta^\alpha)}_{\alpha}}'(0) = \Upsilon_{\alpha_2}\gamma_{\alpha_2} + \Upsilon_{\alpha_3}\gamma_{\alpha_3}, \\
{\mathcal{W}^{(\eta^\alpha)}_{\alpha}}''(0) = \Upsilon_{\alpha_2} {\gamma_{\alpha_2}}^2 + \Upsilon_{\alpha_3}{\gamma_{\alpha_3}}^2
\end{cases}
\end{equation*}
and
\begin{equation*}
\begin{cases}
{\mathcal{Z}^{(\eta^\alpha)}_{\alpha}}(0) = 1, \\
{\mathcal{Z}^{(\eta^\alpha)}_{\alpha}}'(0) = 0, \\
{\mathcal{Z}^{(\eta^\alpha)}_{\alpha}}''(0) = C\left[\Upsilon_{\alpha_2}(\gamma_{\alpha_2}-\gamma_{\alpha_3}) - \Upsilon_{\alpha_1}\gamma_{\alpha_3}\right].
\end{cases}
\end{equation*}
In this case, we are not able to identify explicit solutions to third order ordinary differential equations \eqref{thirdOrderODEf(x)} and \eqref{thirdOrderODEf(x)2}.
So we are forced to use only the numerical method to find the scale functions and hence the value function.

Figure \ref{sigLargerThan0_severalValueFunctions} shows several graphs of the value function
\eqref{sigma>0} for different values of $\alpha$ parameter together with the first and second component occurring in \eqref{sigma>0}.

\begin{figure*}
	\centering
    \begin{subfigure}[b]{0.475\textwidth}
    	\centering
        \includegraphics[width=\textwidth]{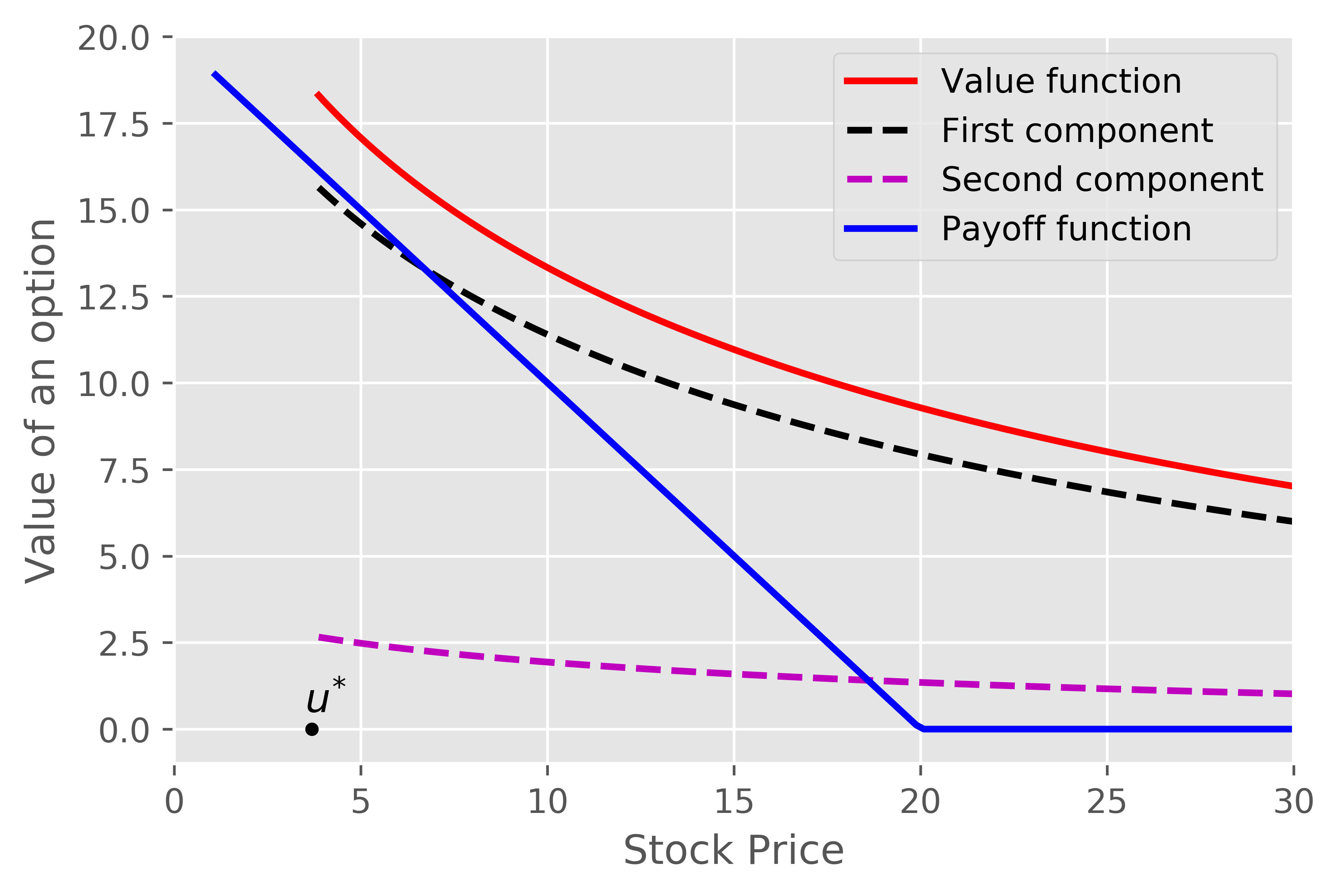}
        \caption{$\alpha = 10$}
        \label{alpha0}
    \end{subfigure}
    \hfill
    \begin{subfigure}[b]{0.475\textwidth}
        \centering
        \includegraphics[width=\textwidth]{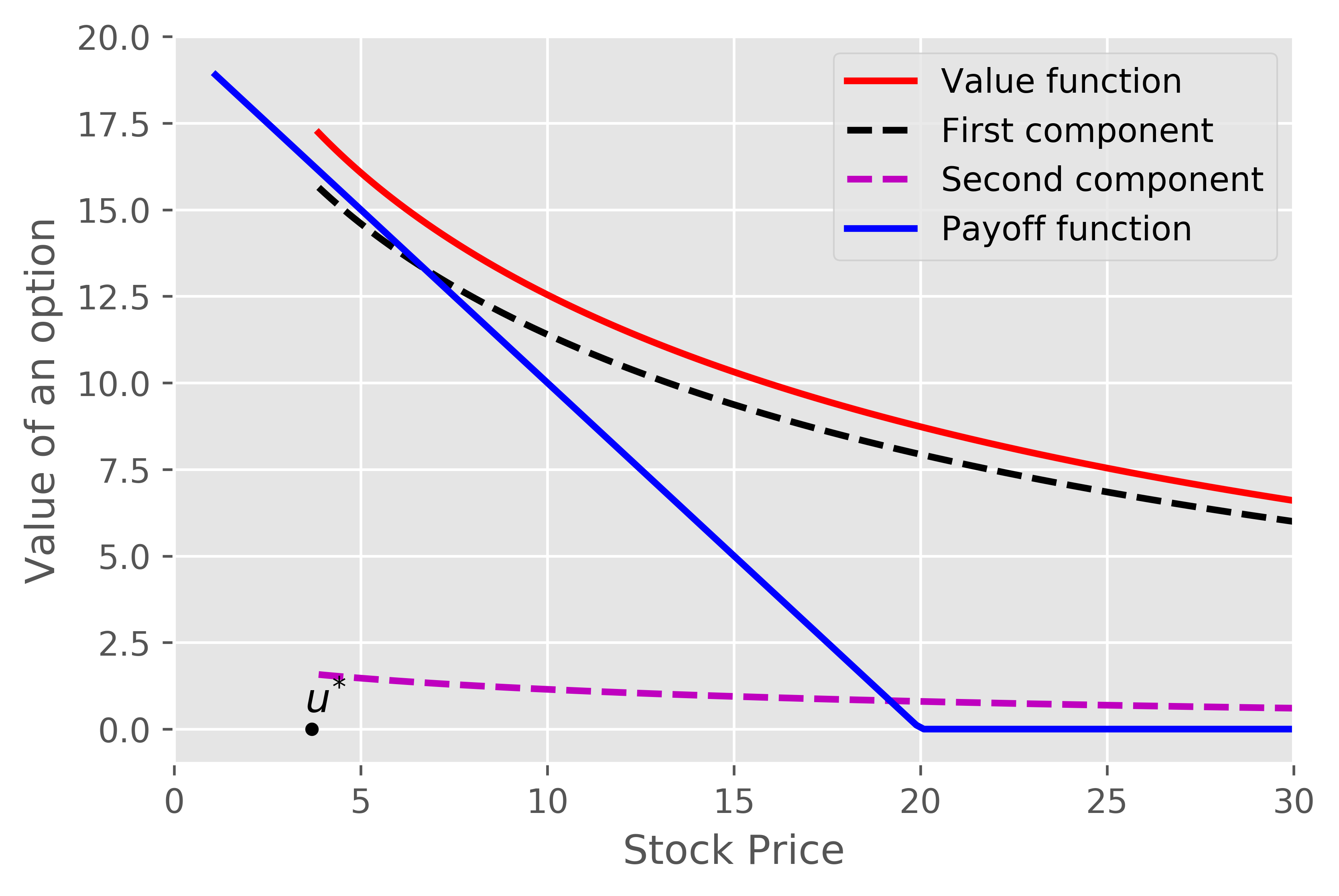}
        \caption{$\alpha = 20$}
        \label{alpha10}
    \end{subfigure}
	\vskip\baselineskip
    \begin{subfigure}[b]{0.475\textwidth}
        \centering
        \includegraphics[width=\textwidth]{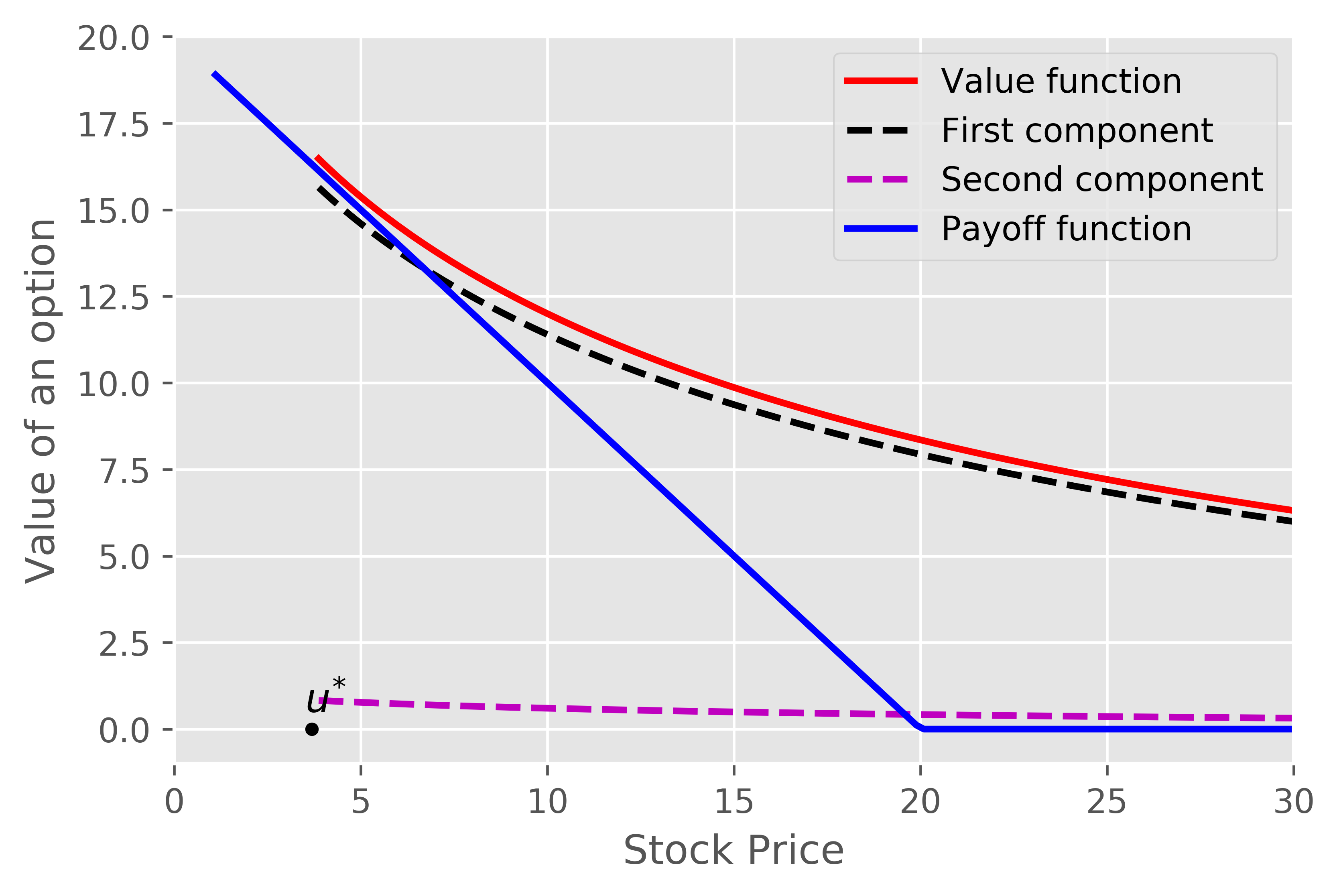}
        \caption{$\alpha = 50$}
        \label{alpha20}
    \end{subfigure}
	\hfill
    \begin{subfigure}[b]{0.475\textwidth}
        \centering
        \includegraphics[width=\textwidth]{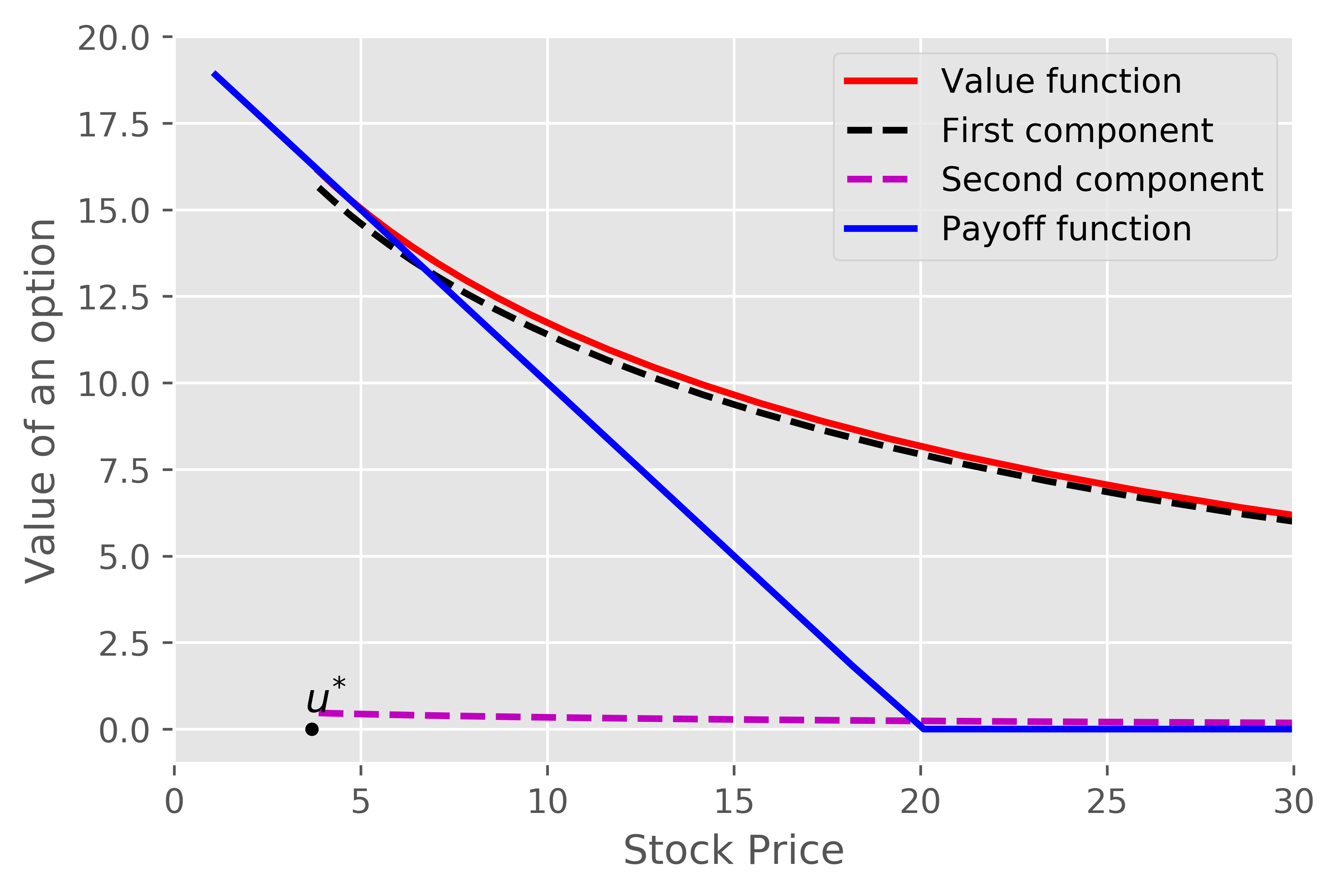}
        \caption{$\alpha = 150$}
        \label{alpha30}
    \end{subfigure}
    \caption{Comparison of value function \eqref{sigma>0} for the particular choice of $\alpha $, $\omega(s) = Cs$ and for the given set of parameters: $K = 20$, $C = 0.1$, $r = 0.05$, $\sigma = 0.2$, $\lambda = 6$, $\varphi = 2$.}
    \label{sigLargerThan0_severalValueFunctions}
\end{figure*}

\section{Conclusions}
In this paper, we have presented the novel approach to pricing the perpetual American put options with asset-dependent discounting.
For the asset price process $S_t$ being the geometric spectrally negative L\'evy process we have shown that value function \eqref{mainProblem} can be represented in a closed form based the omega-type scale functions that solve some ordinary differential equations given in Theorem \ref{mainTheoremODE}.
We have used these theoretical results to perform extended numerical analysis for some key financial examples.
In particular, for some cases we have managed to produce some explicit formulas for the value function. In the cases where it was impossible
to do so, we have used the numerical analysis of the above mentioned ordinary differential equations based on Higher-Order Taylor Method.
We have presented many figures of the value functions that arise in various scenarios.

One can think of further generalisations. For example when the discount factor is randomised. It can be done in different ways, e.g. by introducing additional Markov economical environment.
This type of research is left for future investigations.

\end{document}